\documentclass[envcountsect,11pt]{llncs}

\usepackage{graphicx} 
\usepackage{float} 
\usepackage[utf8]{inputenc}
\usepackage{amsmath,algorithm,algorithmic}
\usepackage{amssymb,enumerate}
 
\usepackage{amsthm}
\usepackage{fullpage}
 
\usepackage{thmtools,thm-restate}
\usepackage{setspace}
\usepackage{xspace}
\usepackage{color}
\usepackage{pdfsync}
\usepackage{cleveref}
\usepackage{tikz}
\usepackage{subcaption}
\captionsetup{compatibility=false}

\newcommand{\setof}[2]{\{#1\mid#2\}}

\newcommand{\apdx}{Appendix\xspace}

\newcommand{\bigraphs}{{\mathcal B}_{n\times n}}
\newcommand{\digraphs}{{\mathcal D}_n}
\newcommand{\len}{\textit{len}}
\newcommand{\pre}{\text{pre}}
\newcommand{\suf}{\text{suf}}
\newcommand {\ov}{\text{ov}}

\newcommand{\ind}{hub}

\newcommand{\Index}{HUB\xspace}

\newcommand{\readability}{readability\xspace}
\newcommand{\readabilities}{readabilities\xspace}
\newcommand{\readabilityFull}{readability\xspace}
\newcommand{\ReadabilityFull}{Readability\xspace}
\newcommand{\readabilityParameter}{r}

\begin{document}

\title{On the Readability of Overlap Digraphs\thanks{This is a full
version of a conference paper of the same title at the
26th Annual Symposium on Combinatorial Pattern Matching (CPM 2015)}}
\author{Rayan Chikhi\inst{1,2}\and
Paul Medvedev\inst{2}
\and Martin Milani\v{c}\inst{3}
\and Sofya Raskhodnikova\inst{2}}

\institute{$^1$CNRS, UMR 9189, France, $^2$The Pennsylvania State University, USA, $^3$University of Primorska, Slovenia}
\maketitle
\begin{abstract}
We introduce the graph parameter {\em \readabilityFull} and study it as a function of the number of vertices in a graph.
	Given a digraph $D$, an injective overlap labeling assigns a unique string to each vertex such that
	there is an arc from $x$ to $y$  if and only if $x$ properly overlaps $y$.
    The \readabilityFull of $D$ is the minimum string length for which an injective overlap labeling exists.
	In applications that utilize overlap digraphs (e.g., in bioinformatics),
	\readabilityFull reflects the length of the strings from which the overlap digraph is constructed.
	We study the asymptotic behaviour of \readabilityFull by casting it in purely graph theoretic terms (without any reference to strings). We
	prove upper and lower bounds on \readabilityFull for certain graph families and general graphs.
\end{abstract}

\section{Introduction}

In this paper, we introduce and study a graph parameter called \readability, motivated by applications
of overlap graphs in bioinformatics.
A string $x$ {\em overlaps} a string $y$ if there is a suffix of $x$
that is equal to a prefix of $y$. They overlap {\em properly} if, in addition, the suffix and prefix are both proper.
The {\em overlap digraph} of a set of strings $S$ is a digraph where each string is a vertex and
there is an arc from $x$ to $y$ (possibly with $x=y$) if and only if $x$ properly overlaps $y$.
Walks in the overlap digraph of $S$ represent
strings that can be spelled by stitching strings of $S$ together,
using the overlaps between them.
Overlap digraphs have various applications, e.g., they are used by approximation algorithms for the Shortest Superstring Problem \cite{sweedyk00}.
Their most impactful application, however, has been in bioinformatics.
Their variants, such as de Bruijn graphs \cite{IW95} and string graphs \cite{M05},
have formed the basis of nearly all genome assemblers used today (see \cite{MKS10,NP13} for a survey),
successful despite results showing that assembly is a hard problem in theory~\cite{BBT13,NP09,MGMB07}.
In this context,
the strings of $S$ represent known fragments of the genome (called {\em reads}),
and the genome is represented by walks in the overlap digraph of $S$.
However, do the overlap digraphs generated in this way capture all possible digraphs, or
do they have any properties or structure that can be exploited?

Braga and Meidanis \cite{BM02} showed that overlap digraphs capture all possible digraphs,
i.e., for every digraph $D$, 
there exists a set of strings $S$ such that their overlap digraph is $D$.
Their proof takes an arbitrary digraph and shows how to construct an {\em injective overlap labeling},
that is, a function assigning a unique string to each vertex,
such that $(x,y)$ is an arc if and only if the string assigned to $x$ properly
overlaps the string assigned to $y$.
However, the {\em length} of strings
produced by their method can be exponential in the number of vertices.
In the bioinformatics context, this is unrealistic,
as the read size is typically much smaller than the number of reads.

To investigate the relationship between the string length and the number of vertices, we introduce a graph parameter called {\em \readability}.
The \readability of a digraph $D$, denoted $\readabilityParameter(D)$,
is the smallest nonnegative integer $r$ such that there exists an injective overlap labeling of $D$
with strings of length $r$.
The result by \cite{BM02} shows that \readability is well defined and is at most $2^{\Delta+1}-1$,
where $\Delta$ 
is the maximum of the in- and out-degrees of vertices in $D$.
However, nothing else is known about the parameter,
though there are papers that look at related notions~\cite{BFKSW02,BFKK02,BHKW99,GP14,LZ07,LZ10,PSW03,TU88}.

In this paper, we study the asymptotic behaviour of \readability as a function of the number of vertices in a graph. We define \readability for  undirected bipartite graphs and show that
the two definitions of \readability are asymptotically equivalent.
We capture \readability using purely graph theoretic parameters (i.e., without any reference to strings).
For trees, we give a parameter that characterizes \readability exactly.
For the larger family of bipartite $C_4$-free graphs, we give a parameter
that approximates \readability to within a factor of $2$.
Finally, for general bipartite graphs, we give a parameter that
is bounded on the same sets of graphs as \readability.

We apply our purely graph theoretic interpretation to
prove \readability upper and lower bounds on several graph families.
We show, using a counting argument,
that almost all digraphs and bipartite graphs have \readability of at least $\Omega(n/\log n)$.
Next, we construct a graph family inspired by Hadamard codes and prove that it has \readability $\Omega(n)$.
Finally, we show that the \readability of trees is bounded from above by their radius,
and there exist trees of arbitrary \readability that achieve this bound.

\section{Preliminaries}
\subsubsection*{General definitions and notation.}\label{sec:dfn}
Let $x$ be a string.
We denote the length of $x$ by $|x|$. We use $x[i]$ to refer to the $i^{\text{th}}$ character of $x$, and
denote by $x[i..j]$ the substring of $x$ from the $i^{\text{th}}$ to the $j^{\text{th}}$ character, inclusive.
We let $\pre_i(x)$ denote the prefix $x[1..i]$ of $x$, and we let $\suf_i(x)$ denote the suffix $x \left[|x| - i + 1 .. |x|\right]$.
Let $y$ be another string.
We denote by $x\cdot y$ the concatenation of $x$ and $y$.
We say that $x$ {\em  overlaps} $y$ if there exists an $i$ with $1 \leq i \le \min\{|x|,|y|\}$
such that $\suf_i(x) = \pre_i(y)$.
In this case, we say that $x$ overlaps $y$ by $i$.
If $i < \min\{ |x|, |y| \}$, then we call the overlap {\em proper}.
Define $\ov(x,y)$ as the minimum $i$ such that $x$ overlaps $y$ by $i$, or $0$ if $x$ does not overlap $y$.
For a positive integer $n$, we denote by $[n]$ the set $\{1,\ldots, n\}$.

We refer to finite simple undirected graphs simply as graphs and
to finite directed graphs without parallel arcs in the same direction as digraphs.
For a vertex $v$ in a graph, we denote the set of neighbors of $v$ by $N(v)$.
A {\em biclique} is a complete bipartite graph.
Note that the one-vertex graph is a biclique (with one of the parts of its bipartition being empty).
Two vertices $u,v$ in a graph are called {\em twins} if they have the same neighbors, i.e., if $N(u)=N(v)$.
If, in addition, $N(u)=N(v)\neq\emptyset$, vertices $u,v$ are called {\em non-isolated twins}.
A {\em  matching} is a graph of maximum degree at most $1$,
though we will sometimes slightly abuse the terminology and not distinguish between matchings and their edge sets.
A cycle (respectively, path) on $i$ vertices is denoted by $C_i$ (respectively, $P_i$).
For graph terms not defined here, see, e.g.,~\cite{BM08}.

\subsubsection*{\ReadabilityFull~of digraphs.}
A {\em  labeling} $\ell$ of a graph or digraph is a function assigning a string to each vertex
such that all strings have the same length, denoted by $\len(\ell)$.
We define $\ov_\ell(u,v) = \ov(\ell(u), \ell(v))$.
An {\em  overlap labeling} of a digraph $D=(V,A)$ is a labeling $\ell$ such that
$(u,v) \in A$ if and only if $0 < \ov_\ell(u,v)) < \len(\ell)$.
An overlap labeling is said to be {\em  injective} if it does not generate duplicate strings.
Recall that the \readability of a digraph $D$, denoted $\readabilityParameter(D)$,
is the smallest nonnegative integer $r$ such that there exists an injective overlap labeling of $D$ of length $r$.
We note that in our definition of readability we do not place any restrictions on the alphabet size.
Braga and Meidanis \cite{BM02} gave a reduction from an overlap labeling of length $\ell$ over an arbitrary alphabet $\Sigma$
to an overlap labeling  of length $\ell \log|\Sigma|$ over the binary alphabet.

\subsubsection*{\ReadabilityFull~of bipartite graphs.}
We also define a modified notion of \readability that
applies to balanced bipartite graphs as opposed to digraphs.
We found that \readability on balanced bipartite graphs is simpler to study but is asymptotically equivalent
to \readability on digraphs.
Let $G=(V,E)$ be a bipartite graph with a given bipartition of its vertex set $V(G) = V_s\cup V_p$.
(We also use the notation $G = (V_s,V_p,E)$.)
We say that $G$ is {\em balanced} if $|V_s| = |V_p|$.
An {\em  overlap labeling of $G$} is a labeling $\ell$ of $G$ such that
for all $u \in V_s$ and $v \in V_p$,
$(u,v) \in E$ if and only if $\ov_\ell(u,v) > 0$.
In other words, overlaps are exclusively between the suffix of a string assigned to a vertex in $V_s$ and the prefix of a string assigned to a vertex in $V_p$.
The {\em  \readability} of $G$ is the smallest nonnegative integer $r$
such that there exists an overlap labeling of $G$ of length $r$.
Note that we do not require injectivity of the labeling,
nor do we require the overlaps to be proper.
As before, we use $\readabilityParameter(G)$ to denote the \readability of $G$.

We note that in our definition of readability we do not place any restrictions on the alphabet size.
Braga and Meidanis \cite{BM02} gave a reduction from an overlap labeling of length $\ell$ over an arbitrary alphabet $\Sigma$
to an overlap labeling  of length $\ell \log|\Sigma|$ over the binary alphabet.

For a labeling $\ell$,
we define ${\it inner}_i(\ell(v)) = \suf_i(\ell(v))$ if $v \in V_s$ and
${\it inner}_i(\ell(v)) = \pre_i(\ell(v))$ if $v \in V_p$.
Similarly, we define ${\it outer}_i(\ell(v)) = \pre_i(\ell(v))$ if $v \in V_s$ and
${\it outer}_i(\ell(v)) = \suf_i(\ell(v))$ if $v \in V_p$.

Let $\bigraphs$ be the set of balanced bipartite graphs with nodes $[n]$ in each part, and
let $\digraphs$ be the set of all digraphs with nodes $[n]$.
The \readabilities~of digraphs and of bipartite graphs are connected by the following theorem,
which implies that they are asymptotically equivalent.

\begin{restatable}{theorem}{restatabletransformation}\label{thm:transformation}
There exists a bijection $\psi:\bigraphs\to\digraphs$
with the property
that for any $G\in\bigraphs$ and $D\in\digraphs$,
such that $D = \psi(G)$,
we have that $\readabilityParameter(G) < \readabilityParameter(D) \leq 2 \cdot \readabilityParameter(G) + 1$.
\end{restatable}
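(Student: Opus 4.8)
The plan is to build $\psi$ from the obvious identification of both kinds of objects with subsets of $[n]\times[n]$, and then to move overlap labelings back and forth by one‑character \emph{truncation} (to get the lower bound on $\readabilityParameter(D)$) and by \emph{concatenation through a private separator} (to get the upper bound).

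\textbf{The bijection.} A digraph $D\in\digraphs$ is determined by its arc set $A\subseteq[n]\times[n]$ (self‑loops allowed), and a graph $G\in\bigraphs$ is determined by its edge set $E\subseteq V_s\times V_p=[n]\times[n]$. I split each digraph vertex $i$ into a suffix‑copy $i_s\in V_s$ and a prefix‑copy $i_p\in V_p$, and I let $\psi$ send $G=(V_s,V_p,E)$ to the digraph with arc set $A=\setof{(i,j)}{\{i_s,j_p\}\in E}$. Both sides range over all subsets of $[n]\times[n]$, so $\psi$ is a bijection; morally, the out‑role of $i$ lives on the suffix side and the in‑role on the prefix side.

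\textbf{The inequality $\readabilityParameter(G)<\readabilityParameter(D)$.} Let $\ell$ be an injective overlap labeling of $D=\psi(G)$ with $\len(\ell)=R=\readabilityParameter(D)$; I may assume $R\ge 1$, the case $R=0$ (a single loopless vertex) being trivial. Since all strings have length $R$, every \emph{proper} overlap in $D$ has length in $\{1,\dots,R-1\}$. I truncate by one character: set $\ell'(i_s)=\suf_{R-1}(\ell(i))$ and $\ell'(i_p)=\pre_{R-1}(\ell(i))$, a labeling of $G$ of length $R-1$. For every $k$ with $1\le k\le R-1$ one has $\suf_k(\ell'(i_s))=\suf_k(\ell(i))$ and $\pre_k(\ell'(j_p))=\pre_k(\ell(j))$, so $\ov_{\ell'}(i_s,j_p)>0$ holds exactly when $\suf_k(\ell(i))=\pre_k(\ell(j))$ for some such $k$, i.e.\ exactly when $\ell(i)$ properly overlaps $\ell(j)$, i.e.\ exactly when $(i,j)\in A$, i.e.\ exactly when $\{i_s,j_p\}\in E$. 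Hence $\ell'$ is an overlap labeling of $G$, giving $\readabilityParameter(G)\le R-1<\readabilityParameter(D)$.

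\textbf{The inequality $\readabilityParameter(D)\le 2\readabilityParameter(G)+1$.} Let $\ell$ be an overlap labeling of $G$ with $\len(\ell)=r=\readabilityParameter(G)$, and pick pairwise distinct separators $\#_1,\dots,\#_n$ occurring in no $\ell$‑string. For each digraph vertex $i$ set
\[
  s_i \;=\; \ell(i_p)\cdot\#_i\cdot\ell(i_s),
\]
a string of length $2r+1$. I claim $i\mapsto s_i$ is an injective overlap labeling of $D$. Examine a candidate overlap of $s_i$ onto $s_j$ of length $k$. If $k\le r$, the relevant suffix of $s_i$ lies wholly in the $\ell(i_s)$‑block and the relevant prefix of $s_j$ wholly in the $\ell(j_p)$‑block, so some such overlap occurs iff $\ov_\ell(i_s,j_p)>0$, i.e.\ iff $(i,j)\in A$. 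If $k\ge r+1$, the suffix of $s_i$ contains its unique separator $\#_i$ at offset $k-r$ from its start while the prefix of $s_j$ contains $\#_j$ at offset $r+1$; since separators are fresh and pairwise distinct, string equality forces these lone occurrences to align and coincide, which happens only when $k=2r+1$ and $i=j$ — the non‑proper full‑length self‑match. Thus $0<\ov(s_i,s_j)<2r+1$ iff $(i,j)\in A$, and the distinct separators make the labeling injective, so $\readabilityParameter(D)\le 2r+1=2\readabilityParameter(G)+1$.

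\textbf{Main obstacle.} The delicate step is the upper bound: I must exclude spurious overlaps that straddle the separator boundary of the $s_i$. The per‑vertex separator handles this automatically, since any match of length $k\ge r+1$ is forced to line up the single $\#$ of the prefix with the single $\#$ of the suffix, and a short offset count shows these agree only at the full length $k=2r+1$; using a private $\#_i$ for each vertex simultaneously secures injectivity at no extra length. By contrast, the inequality $\readabilityParameter(G)<\readabilityParameter(D)$ is routine once one notices that properness confines all relevant overlap lengths in $D$ to $\{1,\dots,R-1\}$, which is precisely what truncating by one character preserves.
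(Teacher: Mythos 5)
Your proposal is correct and takes essentially the same route as the paper: the same identification of both families with subsets of $[n]\times[n]$, the same one-character truncation ($\ell'(i_s)=\suf_{R-1}(\ell(i))$, $\ell'(i_p)=\pre_{R-1}(\ell(i))$) for the strict lower bound, and the same separator-concatenation $\ell(i_p)\cdot\#_i\cdot\ell(i_s)$ for the upper bound, where the paper simply uses the vertex name $i$ itself as the private middle character. The only (harmless) cosmetic difference is your choice of fresh symbols $\#_1,\dots,\#_n$ in place of the names $1,\dots,n$.
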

As a result, we can study \readability of balanced bipartite graphs,
without asymptotically affecting our bounds.
For example, 
we show in Section~\ref{sec:hadamard-graphs} (in~\Cref{thm:hadamard})
that there exists a family of balanced bipartite graphs with \readability $\Omega(n)$,
which leads to the existence of digraphs with \readability $\Omega(n)$.

\section{Graph theoretic characterizations}
In this section, we relate \readability of balanced bipartite graphs to several purely graph theoretic parameters,
without reference to strings.

\subsection{Trees and $C_4$-free graphs}\label{sec:trees_and_C4-free}

For trees, we give an exact characterization of \readability,
while for $C_4$-free graphs, we give a parameter that is a $2$-approximation to \readability.
A {\em decomposition of size $k$} of a bipartite graph $G=(V_s, V_p, E)$ is
a function on the edges of the form
$w:E\rightarrow [k]$.
Note that a labeling $\ell$ of $G$ implies a decomposition of $G$, defined by
$w(e) = \ov_\ell(e)$ for all $e\in E$.
We call this the {$\ell$-decomposition}.
We say that a labeling $\ell$ of $G$ {\em achieves} $w$ if it is an overlap labeling and $w$ is the $\ell$-decomposition.
Note that we can express \readability as
$$\readabilityParameter(G) = \min \{k \mid \text{$w$ is a decomposition of size $k$}\,, 
\text{$\exists$ a labeling $\ell$ that achieves $w$\}}\,.$$
Our goal is to characterize in graph theoretic terms the properties of $w$ which are satisfied if and only if
$w$ is the $\ell$-decomposition, for some $\ell$.
While this proves challenging in general, we can achieve this for trees using a
condition which we call the $P_4$-rule.
We say that $w$ satisfies the {\em $P_4$-rule} if for every induced four-vertex path
$P=(e_1, e_2, e_3)$ in $G$, the following condition holds:
if $w(e_2) = \max\{w(e_1), w(e_2), w(e_3)\}$, then
$w(e_2) \geq w(e_1) + w(e_3)$.
We will prove:
\begin{sloppypar}
\begin{restatable}{theorem}{restatabletree}\label{thm:tree}
	Let $T$ be a tree.
	Then $\readabilityParameter(T) = \min \{k \mid w$ \text{is a decomposition of size $k$ that} \text{satisfies the $P_4$-rule}$\}$.
\end{restatable}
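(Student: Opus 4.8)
The plan is to prove the two inequalities separately: that every $\ell$-decomposition of a tree satisfies the $P_4$-rule, and conversely that every decomposition of a tree satisfying the $P_4$-rule is achievable by some overlap labeling. The first direction is the necessity argument, and the second is the sufficiency (constructive) argument. Since $\readabilityParameter(T)$ is the minimum size over achievable decompositions, and the right-hand side is the minimum size over $P_4$-rule-satisfying decompositions, showing that the two classes of decompositions coincide (achievable $\iff$ satisfies $P_4$-rule) immediately equates the two minima.

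\textbf{Necessity.} First I would show that if $\ell$ is an overlap labeling of $T$ and $w$ is the $\ell$-decomposition, then $w$ satisfies the $P_4$-rule. Consider an induced path $P = (e_1, e_2, e_3)$ on four vertices, say $a - b - c - d$, where by the bipartition the edges alternate between "suffix-overlaps-prefix" orientations. The key observation is that $w(e) = \ov_\ell(e)$ records the overlap length of the edge $e$, i.e.\ the length of the matched suffix/prefix pair. The shared middle vertex of $e_1, e_2$ (and of $e_2, e_3$) forces relationships between the lengths of the overlapping regions at that vertex: both $e_1$ and $e_2$ use an inner segment of the common vertex's string, and likewise for $e_2,e_3$. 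When $w(e_2)$ is the maximum of the three, I would argue that the inner region realizing the $e_2$-overlap is long enough to contain, as disjoint outer pieces at its two ends, the shorter overlaps $e_1$ and $e_3$; because the string labeling a single vertex has finite length and these overlap regions attach at opposite ends of the middle edge's matched block, non-interference forces $w(e_2) \ge w(e_1) + w(e_3)$. Making this precise — tracking exactly which substrings of which vertex labels the three overlaps occupy, and why the two small overlaps land in non-overlapping portions of the long one — is the crux of the necessity direction.

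\textbf{Sufficiency.} For the converse, I would give an explicit construction: given a decomposition $w$ of size $k$ satisfying the $P_4$-rule, build an overlap labeling $\ell$ of length $k$ achieving it. Here I would exploit that $T$ is a tree. Rooting $T$ and processing edges in a traversal order, I would assign to each vertex a string of length $k$ whose characters are chosen so that adjacent vertices $u \in V_s, v \in V_p$ overlap by exactly $w(uv)$, while non-adjacent pairs overlap by $0$. The unlimited alphabet is what makes this feasible: fresh symbols can be introduced to guarantee that no unintended overlaps arise, and the $P_4$-rule is exactly the local compatibility condition needed so that the overlap demands at a shared vertex can be simultaneously satisfied by a single string. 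The acyclicity of $T$ removes the possibility of conflicting cyclic constraints, so a greedy/inductive assignment along the tree should close. I would verify by induction that at each step the partial labeling respects all overlap values seen so far and that the $P_4$-rule guarantees the next assignment can be extended.

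\textbf{Main obstacle.} I expect the necessity direction to be the harder and more delicate part, specifically the bookkeeping that converts "$w(e_2)$ is largest" into the additive bound $w(e_2)\ge w(e_1)+w(e_3)$. One must carefully identify, within the middle vertex's string, where the $e_1$-overlap, the $e_2$-overlap, and the $e_3$-overlap sit, and argue geometrically that the two smaller matched blocks occupy opposite ends of the larger one without overlapping. The subtlety is that the four vertices alternate sides of the bipartition, so the inner/outer roles ($\suf$ versus $\pre$) flip along the path; I would use the $\mathit{inner}$/$\mathit{outer}$ notation introduced in the preliminaries to keep the orientations straight and to phrase the disjointness cleanly. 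The sufficiency construction, by contrast, should be routine once the right traversal and fresh-symbol scheme are fixed, since trees impose no cyclic obstructions.
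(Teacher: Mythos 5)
Your top-level plan --- show that the achievable decompositions of a tree are exactly the decompositions satisfying the $P_4$-rule, and equate the two minima --- is precisely the paper's structure (its \Cref{lem:fourstrings} for necessity and \Cref{lem:achieve} for sufficiency). However, both halves of your sketch are missing the idea that actually makes them work. For necessity: the inequality $w(e_2)\ge w(e_1)+w(e_3)$ is \emph{not} forced by ``non-interference'' or by the finiteness of the strings; nothing prevents the two outer matched blocks from colliding inside the middle block \emph{except} the inducedness of the $P_4$. The paper's argument is by contraposition: writing $a=\ov_\ell(e_1)$, $b=\ov_\ell(e_2)$, $c=\ov_\ell(e_3)$ with $b=\max\{a,b,c\}$ and supposing $b<a+c$, one sets $d=a+c-b$, checks $1\le d\le\min\{a,c\}$, and chases the three substring equalities to conclude $\pre_d(\ell(u_1))=\suf_d(\ell(u_4))$ --- an overlap between the two \emph{endpoints} of the path, contradicting that $(u_4,u_1)$ is a non-edge. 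You identify the disjointness of the blocks as the crux but never name the ingredient that delivers it, namely the non-adjacency of the endpoints; without invoking that non-edge the claim is simply false.

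For sufficiency, ``fresh symbols guarantee no unintended overlaps'' only works when the prescribed overlap on a new edge strictly exceeds the sum of the current label lengths at its endpoints, so that a genuinely new middle block can be inserted. The hard cases are exactly the others: when $w(u,v)$ equals an existing label length (or the sum of the two), the new overlap must be assembled entirely from characters already in use, and ruling out spurious overlaps there requires a nontrivial case analysis using the $P_4$-rule (to kill overlaps propagating through a common neighbour) and, in one case, acyclicity via a ``shared character implies a connecting path'' lemma (the paper's \Cref{lem:pathexists}). Moreover, the paper processes edges in non-decreasing order of \emph{weight}, not in tree-traversal order; this ordering is what guarantees $w(u,v)\ge|\ell_j(u)|+|\ell_j(v)|$ at every step, so that labels only ever grow at their outer ends and all previously established overlaps and non-overlaps are preserved verbatim. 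A root-to-leaf traversal does not obviously have this property --- a later edge at an already-labelled vertex may demand an overlap shorter than that vertex's current label, and your invariant would have to control how inner characters copied from a parent propagate down the tree. So the sufficiency direction is where the real work lies (it is the longest proof in the paper's section), not a routine step, and your assessment of which direction is delicate is inverted.
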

\end{sloppypar}
Note that for cycles, the equality does not hold.
For example, consider the decomposition $w$ of $C_6$
given by the weights $2, 4, 2, 2, 3, 1$.
This decomposition satisfies the $P_4$ rule but it can be shown using case analysis
that there does not exist a labeling $\ell$ achieving $w$.

However, we can give a characterization of \readability for $C_4$-free graphs in terms of a parameter that is asymptotically equivalent to \readability,
using a condition which we call the strict $P_4$-rule.
The strict $P_4$-rule is identical to the $P_4$-rule accept that the inequality becomes strict.
That is, $w$ satisfies the {\em strict $P_4$-rule} if for every induced four-vertex path
$P=(e_1, e_2, e_3)$,
if $w(e_2) = \max\{w(e_1), w(e_2), w(e_3)\}$, then
$w(e_2) > w(e_1) + w(e_3)$.
Note that a decomposition that satisfies the strict $P_4$-rule automatically satisfies the $P_4$-rule,
but not vice-versa.
We will prove:
\begin{sloppypar}
	\begin{restatable}{theorem}{restatablecfour}\label{thm:cfour}
	Let $G$ be a $C_4$-free bipartite graph.
	Let $t = \min \{k \mid w$ \text{is a decomposition of } \text{size $k$ that satisfies the strict $P_4$-rule$\}$}.
	Then $t / 2 < \readabilityParameter(G) \leq t$.
\end{restatable}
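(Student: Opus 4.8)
The plan is to prove the two bounds separately; only the upper bound $\readabilityParameter(G)\le t$ will use $C_4$-freeness, while the lower bound $t/2<\readabilityParameter(G)$ holds for every bipartite graph.

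For the lower bound I would start from an optimal overlap labeling $\ell$ with $\len(\ell)=\readabilityParameter(G)=:r$ and consider its $\ell$-decomposition $w$, whose values lie in $[r]$, so its size is at most $r$. The first step is to show that $w$ satisfies the (non-strict) $P_4$-rule. Take an induced $P_4=(e_1,e_2,e_3)$ on vertices $v_1,v_2,v_3,v_4$ with $w(e_2)$ maximal, and suppose for contradiction that $w(e_2)<w(e_1)+w(e_3)$. Writing the overlap equalities at the two internal vertices $v_2,v_3$ in terms of $\pre$ and $\suf$, the suffix of $\ell(v_3)$ of length $w(e_2)$ coincides with the prefix of $\ell(v_2)$ of length $w(e_2)$; nesting the shorter windows of lengths $w(e_1)$ and $w(e_3)$ inside this common window then forces $\suf_j(\ell(v_1))=\pre_j(\ell(v_4))$ for $j=w(e_1)+w(e_3)-w(e_2)>0$. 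Hence $\ov_\ell(v_1,v_4)>0$, contradicting that $(v_1,v_4)$ is a non-edge of the induced path; the case where the path starts in $V_p$ is symmetric. Once the non-strict rule is established I would pass to the rescaled decomposition $w'(e)=2w(e)-1$, which takes values in $[2r-1]$. A one-line check using integrality shows that $w(e_2)\ge w(e_1)+w(e_3)$ implies $w'(e_2)>w'(e_1)+w'(e_3)$, so $w'$ satisfies the strict $P_4$-rule. Therefore $t\le 2r-1$, giving $t/2<r=\readabilityParameter(G)$.

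For the upper bound I would take a decomposition $w$ of minimum size $t$ satisfying the strict $P_4$-rule and build an overlap labeling $\ell$ of length $t$ from it. The natural construction identifies, for each edge $e=(u,v)$ with $u\in V_s$ and $v\in V_p$, the last $w(e)$ positions of $\ell(u)$ with the first $w(e)$ positions of $\ell(v)$, character by character, and fills every position not pinned by such an identification with a private fresh symbol used nowhere else. Taking the transitive closure of these identifications is automatically consistent, and every edge then has a matching suffix/prefix of length $w(e)\ge 1$, hence positive overlap, so the edges of $G$ are realized. The entire difficulty is to prove that every non-edge $(u,v)$ with $u\in V_s$, $v\in V_p$ has $\ov_\ell(u,v)=0$. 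An accidental overlap of length $i$ would require a chain of forced character identifications linking the length-$i$ suffix window of $\ell(u)$ to the length-$i$ prefix window of $\ell(v)$; each link of such a chain corresponds to an edge of $G$, so the chain traces a walk between $u$ and $v$. Here $C_4$-freeness eliminates the short chains coming from a common pair of neighbors, and the strict inequality $w(e_2)>w(e_1)+w(e_3)$ guarantees, along every induced path, a strictly positive slack in which some private symbol must fall, breaking the putative match.

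I expect the main obstacle to be exactly this last step: arguing that no chain of identifications can cover an entire suffix/prefix window of a non-adjacent pair. I would organize it by analyzing the connected components of the identification relation on (vertex, position) pairs, and showing that, restricted to any two vertices $u\in V_s$ and $v\in V_p$, positions carrying a common symbol never align as a full length-$i$ suffix--prefix pair unless $(u,v)\in E$; $C_4$-freeness controls the two-edge configurations while the strict $P_4$-rule controls how windows propagate along longer induced paths. The remaining bookkeeping---that all strings have length exactly $t$ and that edge overlaps are positive---is routine.
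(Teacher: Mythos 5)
Your lower bound is complete and is exactly the paper's argument: establish the non-strict $P_4$-rule for the $\ell$-decomposition of an optimal labeling (this is \Cref{lem:fourstrings}, and indeed needs no $C_4$-freeness), then rescale via $w'(e)=2\,\ov_\ell(e)-1$ and check that $b\ge a+c$ implies $2b-1>(2a-1)+(2c-1)$, giving $t\le 2\readabilityParameter(G)-1$. No issues there.

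The upper bound is where there is a genuine gap. Your construction (identify the length-$w(e)$ suffix window of $u$ with the prefix window of $v$ for each edge, take the transitive closure, put a private symbol in every unpinned class) produces in the end the same labeling as the paper's, but you leave the crucial verification --- that no non-edge acquires a positive overlap --- as a sketch, and the sketch as stated does not go through. Specifically, your claim that ``the strict inequality $w(e_2)>w(e_1)+w(e_3)$ guarantees, along every induced path, a strictly positive slack'' is only available when the middle edge is the heaviest of the three; for a general chain of identifications the weights along the walk need not be unimodal, and the static component analysis you propose gives you no control over which edge of a relevant $P_4$ carries the maximum. The missing idea, which is the heart of the paper's \Cref{lem:achieve}, is to build the labeling \emph{incrementally, processing edges in non-decreasing order of weight}. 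Then when $e_{j+1}=(u,v)$ is added it is automatically the heaviest edge in the $P_4$ formed with the current heaviest edges at $u$ and at $v$ (induced by $C_4$-freeness), so the strict $P_4$-rule forces $w(u,v)>|\ell_j(u)|+|\ell_j(v)|$ whenever both old labels are nonempty; the nonempty block of fresh characters inserted in the middle then instantly kills every unintended overlap involving $u$ or $v$ and pins the $u$--$v$ overlap to length exactly $w(u,v)$. The one remaining case, $\ell_j(u)=\epsilon$ and $w(u,v)=|\ell_j(v)|$, is handled by a separate two-edge argument in which $C_4$-freeness makes the path $(u,v),(v,u'),(u',v')$ induced and the (non-strict) $P_4$-rule yields a contradiction --- this is the ``short chain'' configuration you allude to, but it must be argued via the induction hypothesis on $\ell_j$, not via the component structure alone. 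Without the weight ordering, your plan identifies the right obstacle but does not overcome it.
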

\end{sloppypar}
We note that this characterization cannot be extended to graphs with a $C_4$.
The example in \Cref{fig:cfour} shows a graph with a decomposition which satisfies the strict $P_4$-rule but
it can be shown using case analysis that there does not exists a labeling $\ell$ achieving this decomposition.
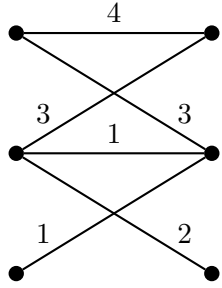
\begin{figure}[t]
  \centering
        \begin{tikzpicture}[-,shorten >=0pt,auto,node distance=2.6cm,
            thick]
            \tikzstyle{main node}=[circle, draw, fill=black, inner sep=1pt, minimum width=5pt]
            \node[main node] (1) {};
            \node[main node] (2) [right of=1] {};
            \node[main node] (3) [below of=1, yshift=1cm] {};
            \node[main node] (4) [right of=3] {};
            \node[main node] (5) [below of=3, yshift=1cm] {};
            \node[main node] (6) [right of=5] {};

            \path[]
            (1) edge node [above] {4} (2)
            (3) edge node [below right, xshift=.7cm] {3} (2)
            (3) edge node [above] {1} (4)
            (3) edge node [below right, xshift=.7cm] {2} (6)
            (5) edge node [below left, xshift=-.7cm] {1} (4)
            (1) edge node [below left, xshift=-.7cm] {3} (4);
        \end{tikzpicture}
    \caption{Illustration that \Cref{thm:cfour} cannot be extended to graphs with a $C_4$.
    Example of a graph and decomposition that satisfies the strict $P_4$-rule,
yet no overlap labeling $\ell$ exists that achieves it.} 
        \label{fig:cfour}
\end{figure}

In the remainder of this section, we will prove these two theorems.
We first show that an $\ell$-decomposition satisfies the $P_4$-rule (proof in the \apdx).
\begin{restatable}{lemma}{restatablefourstrings}\label{lem:fourstrings}
Let $\ell$ be an overlap labeling of a bipartite graph $G$.
Then the $\ell$-decomposition satisfies the $P_4$-rule.
\end{restatable}
Now, consider a $C_4$-free bipartite graph $G = (V_s, V_p, E)$ and let $w$ be a decomposition satisfying the $P_4$-rule.
We will prove both \Cref{thm:tree} and \Cref{thm:cfour} by constructing the following labeling.
Let us order the edges $e_1, \ldots, e_{|E|}$ in order of non-decreasing weight.
For $0\le j\le |E|$, we define the graph $G^j = (V_s, V_p, \setof{e_i \in E}{ i \leq j})$.
For a vertex $u$, define ${\it len}_j(u) = \max \setof{w(e_i)}{i \leq j, e_i \text{ is incident with $u$}}$, if
the degree of $u$ in $G^j$ is positive, and $0$ otherwise.
We will recursively define a labeling $\ell_j$ of $G^j$
such that $|\ell_j(u)| = {\it len}_j(u)$ for all $u$.
The initial labeling $\ell_0$ assigns $\epsilon$ to every vertex.
Suppose we have a labeling $\ell_j$ for $G^j$, and
$e_{j+1} = (u,v)$.
Recall that because $w$ satisfies the $P_4$-rule and $G$ is $C_4$-free,
$w(u,v) \geq {\it len}_j(u) + {\it len}_j(v) = |\ell_j(u)| + |\ell_j(v)|$.
(Note that the inequality holds also in the case when one of the two summands is $0$.)
Let $A$ be a (possibly empty) string of length $w(u,v) - |\ell_j(u)| - |\ell_j(v)|$
composed of non-repeating characters that do not exist in $\ell_j$.
Define $\ell_{j+1}$ as $\ell_{j+1}(x) = \ell_j(x)$ for all $x \notin \{u,v\}$, and
	$\ell_{j+1} (u)  = \ell_{j+1}(v) = \ell_j (v) \cdot A \cdot \ell_j (u)$.
We denote the labeling of $G$ as $\ell = \ell_{|E|}$.
We will slightly abuse notation in this section,
ignoring the fact that a labeling must have labels of the same length.
This is inconsequential, because strings can always be padded from the beginning or end with
distinct characters without affecting any overlaps.

First, we state a useful Lemma, that two vertices share a character in the labeling only if they are connected
by a path (proof in the \apdx).
\begin{restatable}{lemma}{restatablepathexists}
	\label{lem:pathexists}
	Let $c$ be a character that is contained in $\ell_j(u)$ and in $\ell_j(v)$, for some pair of distinct vertices.
	Then there exists a path between $u$ and $v$ in $G^j$.
\end{restatable}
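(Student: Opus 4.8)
The plan is to prove Lemma~\ref{lem:pathexists} by induction on the construction index $j$, tracking how characters are introduced and propagated through the recursive labeling $\ell_0, \ell_1, \ldots, \ell_{|E|}$. The base case $j=0$ is vacuous, since $\ell_0$ assigns the empty string $\epsilon$ to every vertex, so no character is shared by any pair. For the inductive step, suppose the claim holds for $\ell_j$, and consider the step that processes edge $e_{j+1} = (u,v)$, producing $\ell_{j+1}$. Recall that this step relabels only $u$ and $v$, setting $\ell_{j+1}(u) = \ell_{j+1}(v) = \ell_j(v)\cdot A\cdot\ell_j(u)$, where $A$ consists of brand-new characters not previously used, and leaving $\ell_{j+1}(x)=\ell_j(x)$ for all other vertices $x$.

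The heart of the argument is a careful case analysis on which pair of vertices $u',v'$ shares a character $c$ in $\ell_{j+1}$, and on the origin of that character. First I would dispose of the characters in $A$: these are non-repeating and freshly introduced, so each appears in exactly the two (identical) strings $\ell_{j+1}(u)$ and $\ell_{j+1}(v)$, and nowhere else; the only pair they witness is $\{u,v\}$, which is joined by the edge $e_{j+1}$ in $G^{j+1}$, giving a path of length one. For a character $c$ not in $A$, note that $c$ already existed in $\ell_j$, and the multiset of strings containing $c$ changes only at $u$ and $v$. If neither $u'$ nor $v'$ lies in $\{u,v\}$, then $c$ was shared by $u',v'$ already in $\ell_j$, and the inductive hypothesis supplies a path in $G^j\subseteq G^{j+1}$. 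The interesting cases are when the shared pair involves $u$ or $v$.

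The key observation I would isolate is that the new string at $u$ (equivalently $v$) is the concatenation $\ell_j(v)\cdot A\cdot\ell_j(u)$, so any non-$A$ character $c$ now appearing in $\ell_{j+1}(u)=\ell_{j+1}(v)$ must have appeared either in $\ell_j(u)$ or in $\ell_j(v)$. Suppose $c$ is shared by $u$ and some third vertex $v'\notin\{u,v\}$ in $\ell_{j+1}$; then $c\in\ell_{j+1}(v')=\ell_j(v')$ and $c\in\ell_j(u)\cup\ell_j(v)$. In the subcase $c\in\ell_j(u)$, the inductive hypothesis gives a path from $u$ to $v'$ in $G^j$; in the subcase $c\in\ell_j(v)$, it gives a path from $v$ to $v'$ in $G^j$, which extends through the edge $(u,v)\in G^{j+1}$ to a path from $u$ to $v'$. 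The symmetric situation with $v$ in place of $u$ is handled identically, and the case where the shared pair is exactly $\{u,v\}$ is immediate from the edge $e_{j+1}$. In every branch we produce the required path in $G^{j+1}$, completing the induction.

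The main obstacle I anticipate is purely bookkeeping rather than conceptual: being exhaustive about the provenance of $c$ (whether it lives in the $\ell_j(v)$-block, the $A$-block, or the $\ell_j(u)$-block of the concatenated string) and ensuring that every path produced by the inductive hypothesis can indeed be realized inside $G^{j+1}$, using the newly added edge $(u,v)$ to bridge the two sides when necessary. One subtle point worth stating explicitly is that because $u$ and $v$ receive \emph{identical} strings, a character shared between, say, $u$ and $v'$ is automatically also shared between $v$ and $v'$; this symmetry is what lets the edge $(u,v)$ always serve as the connecting link, and I would flag it early to avoid redundant casework.
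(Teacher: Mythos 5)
Your proposal is correct and follows essentially the same route as the paper's proof: induction on the construction index, with the fresh characters of $A$ handled separately and the characters inherited from $\ell_j(v)$ (resp.\ $\ell_j(u)$) handled via the inductive hypothesis plus the newly added edge $(u,v)$ as the bridge. The only difference is presentational — you organize the argument as a case analysis on which pair shares the character, while the paper phrases it as tracking the newly labeled positions of $u$ and $v$ — but the content is identical.
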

We are now ready to show that $\ell$ achieves $w$ for trees, and,  if $w$ also satisfies the strict $P_4$-rule,
for $C_4$-free graphs.
\begin{restatable}{lemma}{restateachieve}
	\label{lem:achieve}
	Let $G$ be a $C_4$-free bipartite graph and let $w$ be a decomposition that satisfies the $P_4$-rule.
	Then the above defined labeling $\ell$ achieves $w$ if $w$ satisfies the strict $P_4$-rule or if $G$ is acyclic.
\end{restatable}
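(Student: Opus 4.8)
The plan is to verify directly that the recursively constructed labeling $\ell$ is an overlap labeling of $G$ whose induced decomposition is $w$; concretely, for every $u\in V_s$ and $v\in V_p$ I must show $\ov_\ell(u,v) = w(u,v)$ when $(u,v)\in E$ and $\ov_\ell(u,v)=0$ otherwise. Since $\ov_\ell$ records the \emph{minimum} positive overlap, I would split this into an existence half (some overlap of the right size occurs on each edge) and a uniqueness half (no shorter or spurious overlap occurs), the latter carrying essentially all of the difficulty.

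For the existence half I first record two structural invariants of the construction: a vertex $u\in V_s$ only ever has material \emph{prepended} to its label, while a vertex $v\in V_p$ only ever has material \emph{appended}. Hence $\suf_i(\ell(u))$ and $\pre_i(\ell(v))$ stabilise once the labels are long enough, and are already fixed at the step at which an incident edge of weight $i$ is processed. When the edge $e=(u,v)$ is processed, the construction sets $\ell(u)$ and $\ell(v)$ equal to one common string of length $w(u,v)$, and the two invariants guarantee that this common string survives as $\suf_{w(u,v)}(\ell(u))$ and as $\pre_{w(u,v)}(\ell(v))$ in the final labeling. Thus $w(u,v)$ is a genuine overlap and $\ov_\ell(u,v)\le w(u,v)$ on every edge; this is the easy inequality.

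For the uniqueness half I would prove the single statement that implies both remaining facts: \emph{if $\suf_i(\ell(u)) = \pre_i(\ell(v))$ for some $i>0$, then $(u,v)\in E$ and $i \ge w(u,v)$.} Expanding the recursion, each label is a concatenation of atomic fresh-character blocks, one per edge that contributed to it; reading $\ell(u)$ from right to left, respectively $\ell(v)$ from left to right, lists these blocks along a strictly-decreasing-weight path emanating from $u$, respectively from $v$, a walk whose reachability is controlled by \Cref{lem:pathexists}. Because distinct edges use disjoint character sets, the equality $\suf_i(\ell(u)) = \pre_i(\ell(v))$ forces the two block sequences to agree block-by-block, and matching a block identifies its edge, so the two paths must traverse a common stretch of edges. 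The role of the strict $P_4$-rule is to keep the separator blocks between consecutive edge-contributions nonempty, so that block boundaries are marked by characters occurring exactly once and the matching cannot silently split or merge two different blocks; here one uses that the relevant four-vertex path, formed by the heaviest incident edges at $u$ and at $v$ processed so far, is induced precisely because $G$ is $C_4$-free. Pushing the block-by-block identification to its two ends then pins the common stretch down to the single edge $(u,v)$ and yields $i\ge w(u,v)$, which together with the existence half gives $\ov_\ell(u,v)=w(u,v)$ on edges and $\ov_\ell(u,v)=0$ on non-edges.

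The step I expect to be the main obstacle is ruling out overlaps that \emph{wrap around a cycle}: a priori a fresh block could reach $\ell(v)$ along two different routes, producing an overlap shorter than $w(u,v)$ or an overlap across a non-edge. When $G$ is a tree this is impossible, because \Cref{lem:pathexists} supplies a \emph{unique} connecting path, the traced block sequences are genuine paths, and the argument closes using only the non-strict $P_4$-rule; this is what proves \Cref{thm:tree}. For $C_4$-free graphs, cycles of length at least six survive, and the delicate configuration is the one in which a fresh block would enter a single vertex from the two endpoints of one edge at once, i.e.\ where adding an edge closes a cycle. Here I would combine $C_4$-freeness (no two vertices have two common neighbours, hence no alternate two-step detour) with the nonemptiness of the separator blocks forced by the strict $P_4$-rule to show that no such double entry occurs and that the matched block sequences leave no slack, so the shortest overlap is exactly the intended one. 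This is the place where both hypotheses of the lemma are genuinely consumed, and it is the heart of \Cref{thm:cfour}.
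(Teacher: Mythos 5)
There is a genuine gap in the uniqueness half. Your block-matching argument rests on the claim that the (strict) $P_4$-rule keeps every separator block $A$ nonempty, so that block boundaries are marked by fresh characters and the suffix/prefix match can be decoded block by block. That claim is false: when the edge $e_{j+1}=(u,v)$ is processed with $\ell_j(u)=\epsilon$ and $w(u,v)=|\ell_j(v)|$ (i.e., $u$ was isolated in $G^j$ and the new weight exactly equals the length of $v$'s current label), the filler has length $w(u,v)-0-|\ell_j(v)|=0$, so $A=\epsilon$ \emph{even under the strict $P_4$-rule} --- the configuration of incident edges is only a $P_3$, so neither version of the rule applies to forbid it. In this case $u$'s new label consists entirely of previously used characters, your ``distinct edges use disjoint character sets'' decoding breaks down, and a spurious overlap between $u$ and some non-neighbor $v'$ is not obviously excluded. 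The paper's proof devotes a whole case to exactly this situation: it exhibits a vertex $u'$ with $w(u',v)=|\ell_j(v)|$, transfers the hypothetical overlap ${\it inner}_k(\ell_{j+1}(u))={\it inner}_k(\ell_{j+1}(v'))$ to an overlap between $u'$ and $v'$, uses the induction hypothesis to get the edge $(u',v')$ with $w(u',v')\le k$, and then derives a violation of the (non-strict) $P_4$-rule on the path $(u,v),(v,u'),(u',v')$, which is induced because $G$ is $C_4$-free. Your proposal contains no substitute for this step, and it is needed in \emph{both} branches of the lemma (strict rule and acyclic), so the existence half plus the nonempty-$A$ case do not suffice.

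Two smaller inaccuracies compound this. First, you attribute to the strict $P_4$-rule the job of forcing separators nonempty; in the paper its only role is to rule out the single case $w(u,v)=|\ell_j(u)|+|\ell_j(v)|$ with \emph{both} prior labels nonempty (via the induced $P_4$ formed by the maximum-weight edges previously incident with $u$ and $v$), which is a different and narrower statement. Second, \Cref{lem:pathexists} only asserts the \emph{existence} of a connecting path, not its uniqueness; the way acyclicity is actually consumed is that a spurious or too-short overlap created at step $j+1$ forces a $u$--$v$ path already in $G^j$, which together with $e_{j+1}$ closes a cycle --- a contradiction. Your high-level picture (fresh characters block spurious overlaps; path-tracing via \Cref{lem:pathexists}; cycles are the danger) is the right one, but as written the proof does not go through without the empty-filler case analysis.
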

\begin{proof}
	We prove by induction on $j$ that $\ell_j$ achieves $w$ on $G^j$.
	Suppose that the Lemma holds for $\ell_j$ and consider the effect of adding $e_{j+1} = (u,v)$.
	Notice that to obtain $\ell_{j+1}$ we only change labels by adding outer characters, hence,
	any two vertices that overlap by $i$ in $\ell_j$ will also overlap by $i$ in $\ell_{j+1}$.
	Moreover, only the labels of $u$ and $v$ are changed, and
	an overlap between $u$ and $v$ of length $w(u,v)$ is created.
	It remains to show that no shorter overlap is created between $u$ and $v$ and
	that no new overlap is created involving $u$ or $v$, except the one between $u$ and $v$.

	First, consider the case when $w(u,v) > |\ell_j(u)| + |\ell_j(v)|$ and so the middle string ($A$) of the new labels is non-empty.
	Because the characters of $A$ do not appear in $\ell_j$, we do not create any new overlaps except besides the one between
    $u$ and $v$ and the only overlap between $u$ and $v$ must be of length $w(u,v)$ since the characters of $A$ must align.
	Thus $\ell_{j+1}$ achieves $w$ on $G^{j+1}$.

	Next, consider the case when
	$w(u,v) = |\ell_j(v)|$ (the case when $w(u,v) = |\ell_j(u)|$ is symmetric).
	In this case, $A = \epsilon$, $\ell_j(u) = \epsilon$, and $|\ell_j(v)|>0$ (since $w(u,v)>0$).
	Suppose for the sake of contradiction that there exists a vertex $v' \ne v$ such that
	$(u,v')$ is not an edge but ${\it inner}_k (\ell_{j+1}(u)) = {\it inner}_k (\ell_{j+1}(v'))$,
	for some $0 < k \leq w(u,v)$.
	We know, from the construction of $\ell_j$, that there exists a vertex $u'$ such that $w(u',v) = |\ell_j (v)|$.
	We then have ${\it inner}_k(\ell_j(u')) = {\it outer}_k(\ell_j(v)) = {\it inner}_k(\ell_{j+1}(u)) = {\it inner}_k(\ell_{j+1}(v')) = {\it inner}_k (\ell_j(v'))$.
	By the induction hypothesis, there is an edge $(u',v')$ and $w(u',v') \leq k$. 
	The edges $(u,v), (v, u'), (u', v')$ form a $P_4$, which is also induced because $G$ is $C_4$-free.
	Because $w(u,v) = w(u',v) \geq w(u',v')>0$, the $P_4$-rule is violated, a contradiction.
	Therefore no new overlaps are created involving $u$.
	To show that there are no overlaps from $u$ to $v$ smaller than $w(u,v)$,
	observe that any such overlap would also be an overlap between $u'$ and $v$ that is smaller than $w(u',v)$,
	contradicting the induction hypothesis.
	Therefore, $\ell_{j+1}$ achieves $w$ on $G^{j+1}$.

	It remains to consider the case when
	$w(u,v) = |\ell_j(u)| + |\ell_j(v)|$
	and $\ell_j(u) \ne \epsilon \ne \ell_j(v)$.
	We first show that this case cannot arise if $w$ satisfies the strict $P_4$-rule.
	There must exist edges in $G^j$ of weights $|\ell_j(u)|$ and $|\ell_j(v)|$ incident with  $u$ and $v$, respectively.
	These edges, together with $(u,v)$ in the middle, form a $P_4$,
	which must be induced since $G$ does not contain a $C_4$.
	Furthermore, $(u,v)$ achieves the maximum weight.
	The strict $P_4$-rule implies $w(u,v) > |\ell_j(u)| + |\ell_j(v)|$, a contradiction.
	
	Now, assume that $G$ is acyclic, and suppose for the sake of contradiction that the new labeling creates an overlap between $v$ and a
    vertex $u' \ne u$ (the case of an overlap between $u$ and $v' \ne v$ is symmetric).
	Consider the character $c$ at position $|\ell_j(v)| + 1$ of $\ell_{j+1}(v)$.
	The length of the overlap between $\ell_{j+1}(v)$ and $\ell_{j+1}(u')=\ell_{j}(u')$ must be greater than $|\ell_j(v)|$,
    otherwise it would have been an overlap in $\ell_j$.	Thus, $\ell_j(u')$ must contain $c$.
	By construction of $v$'s new label, $\ell_{j} (u)$ must also contain $c$.
	Applying \Cref{lem:pathexists}, there must be a path between $u'$ and $u$ in $G^j$.
	On the other hand, the overlap between $v$ and $u'$ spans $(\ell_{j}(v))[1]$, and hence $\ell_{j}(v)$ and $\ell_{j}(u')$ must share a character.
	Applying \Cref{lem:pathexists}, 
	there must exist a path between $u'$ and $v$ in $G^j$.
	Consequently, there exists a path from $u$ to $v$ in $G^j$.
	Combining this path with $e_{j+1} = (u,v)$, we get a cycle in $G^{j+1}$, which is a contradiction.

	Finally suppose, for the sake of contradiction, that $\ell_{j+1}(u)$ overlaps $\ell_{j+1}(v)$ by some $k < w(u,v)$.
	By the induction hypothesis, $k > |\ell_j(v)|$.
	Consider the last character $c$ of $\ell_j(v)$.
	It must also appear as the inner position $i = k - |\ell_j(v)| + 1$ in $\ell_{j+1}(u)$.
	Since $k \le  w(u,v)-1$, we have $i \leq w(u,v) - |\ell_j(v)| = |\ell_j(u)|$, and
	the $i^{\text{th}}$ inner position in $\ell_{j+1}(u)$ is also the
	the $i^{\text{th}}$ inner position in $\ell_j(u)$.
	Applying \Cref{lem:pathexists} to $c$ in $\ell_j(v)$ and $\ell_j(u)$, there must exist a path between $u$ and $v$ in $G^j$.
	Combining this path with $e_{j+1} = (u,v)$, we get a cycle in $G^{j+1}$, which is a contradiction.
\end{proof}
We can now prove Theorems~\ref{thm:tree} and~\ref{thm:cfour}.
\begin{proof}[Proof of~\Cref{thm:tree}]
	Let $t = \min \{k \mid \text{$w$ is a decomposition of size $k$ that satisfies the $P_4$-rule}\}$.
	First, let $w$ be a decomposition of size $t$ satisfying the $P_4$-rule.
	\Cref{lem:achieve} states that the above defined labeling $\ell$ achieves $w$ and so $\readabilityParameter(T) \leq \max_e(w_e) = t$.
	For the other direction,
	consider an overlap labeling $b$ of $T$ of minimum length.
	By \Cref{lem:fourstrings}, the $b$-decomposition satisfies the $P_4$-rule.
	Hence,
	$\readabilityParameter(T) = \len(b) \geq t$. 
\end{proof}
\begin{proof}[Proof of~\Cref{thm:cfour}]
	Let $w$ be a decomposition of size $t$ satisfying the strict $P_4$-rule.
	By \Cref{lem:achieve}, the above defined labeling $\ell$ achieves $w$ and so $\readabilityParameter(G) \leq \max_e(w_e) = t$.
	On the other hand, let $b$ be an overlap labeling of length $\readabilityParameter(G)$.
	Define $w(e) = 2\ov_b(e) - 1$, for all $e \in E(G)$.
	We claim that $w$ satisfies the strict $P_4$-rule, which will imply that $t \leq \max_e w(e) = 2\readabilityParameter(G) - 1$.
	To see this, let $e_1, e_2, e_3$ be the edges of an arbitrary induced $P_4$.
	Observe that $w(e_2) = \max \{w(e_1), w(e_2), w(e_3)\}$ if and only if
	$\ov_b(e_2) = \max \{\ov_b(e_1), \ov_b(e_2), \ov_b(e_3)\}$.
	Furthermore, it can be algebraicly verified that if
	$\ov_b(e_2) \geq \ov_b(e_1) + \ov_b(e_3)$ then
	$w(e_2) > w(e_1) + w(e_3)$.
	By \Cref{lem:fourstrings}, the $b$-decomposition satisfies the $P_4$-rule and,
	therefore, $w$ satisfies the strict $P_4$-rule.
\end{proof}

\subsection{General graphs}\label{sec:hub-up}
In the previous subsection, we derived graph theoretic characterizations of \readability that are exact for trees and approximate for $C_4$-free bipartite graphs.
Unfortunately, for a general graph, it is not clear how to construct an overlap labeling
from a decomposition satisfying the $P_4$-rule (as we did in \Cref{lem:achieve}).
In this subsection, we will consider an alternate rule (HUB-rule),
which we then use to construct an overlap labeling.

Given $G = (V_s, V_p, E)$ and a decomposition $w$ of size $k$,
we define $G^w_i$, for $i\in[k]$, as a graph with the same vertices as $G$ and edges given by
$E(G^w_i) = \setof{e \in E}{w(e) = i}$.
When $w$ is obvious from the context, we will write $G_i$ instead of $G^w_i$.
Observe that the edge sets of $G^w_1,\dots, G^w_k$ form a partition of $E$.
We say that $w$ satisfies the {\em hierarchical-union-of-bicliques rule}, abbreviated as the {\em HUB-rule},
if the following conditions hold:
i) for all $i\in[k]$, $G^w_i$ is a disjoint union of bicliques, and
ii) if two distinct vertices $u$ and $v$ are non-isolated twins in $G^w_i$ for some
$i\in\{2,\dots,k\}$ then, for all $j\in [i-1]$,
$u$ and $v$ are (possibly isolated) twins  in $G^w_j$.
An example of a decomposition satisfying the HUB-rule is any $w:E\to[k]$
such that $G^w_1$ is an (arbitrary) disjoint union of bicliques and $G^w_2,\dots,G^w_k$ are matchings.
We can show that the decomposition implied by any overlap labeling must satisfy the HUB-rule
(proof in the \apdx).
\begin{restatable}{lemma}{restatablehubrule}\label{lem:hubrule}
Let $\ell$ be an overlap labeling of a bipartite graph $G$.
Then the $\ell$-decomposition satisfies the HUB-rule.
\label{prop:ldecomp}
\end{restatable}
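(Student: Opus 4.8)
The plan rests on one structural fact read off from $w(e)=\ov_\ell(e)$: if $(u,v)\in E$ with $u\in V_s$, $v\in V_p$ and $w(u,v)=i$, then $\suf_i(\ell(u))=\pre_i(\ell(v))$, because $\ov$ records the overlap. So an edge of weight $i$ forces its $V_s$-endpoint and its $V_p$-endpoint to agree on a common length-$i$ string. Both parts of the HUB-rule will then follow from a single comparison lemma together with the minimality built into $\ov$ (that $\ov$ is the \emph{minimum} overlap).

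The comparison lemma I would isolate is: if $u_1,u_2\in V_s$ satisfy $\suf_i(\ell(u_1))=\suf_i(\ell(u_2))$, then $N_{G_j}(u_1)=N_{G_j}(u_2)$ for every $j\le i$ (and symmetrically for two vertices of $V_p$ agreeing on their length-$i$ prefixes). To prove it, fix $j\le i$; the hypothesis gives $\suf_j(\ell(u_1))=\suf_j(\ell(u_2))$. If $v\in N_{G_j}(u_1)$, then $\suf_j(\ell(u_1))=\pre_j(\ell(v))$, so $\suf_j(\ell(u_2))=\pre_j(\ell(v))$ and $\ell(u_2)$ overlaps $\ell(v)$ by $j$; moreover the weight is exactly $j$, since an overlap of $\ell(u_2)$ and $\ell(v)$ shorter than $j$ would, by the same suffix-agreement, produce an equally short overlap of $\ell(u_1)$ and $\ell(v)$, contradicting $\ov(\ell(u_1),\ell(v))=j$. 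Hence $v\in N_{G_j}(u_2)$, and by symmetry the neighborhoods coincide.

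Condition (i) then follows by taking $j=i$: a common neighbor forces the two vertices onto the same side, so if $u_1,u_2$ have a common neighbor $z$ in $G_i$, then $\suf_i(\ell(u_1))=\pre_i(\ell(z))=\suf_i(\ell(u_2))$, and the comparison lemma gives $N_{G_i}(u_1)=N_{G_i}(u_2)$. Thus any two same-side vertices sharing a $G_i$-neighbor have identical $G_i$-neighborhoods, which is exactly the statement that each $G_i$ is a disjoint union of bicliques (in a bipartite graph, the property ``common neighbor implies equal neighborhood'' characterizes disjoint unions of bicliques, as one checks by propagating equality of neighborhoods along a path within a component). For condition (ii), if $u,v$ are non-isolated twins in $G_i$ they lie on the same side and share a $G_i$-neighbor, so again $\suf_i(\ell(u))=\suf_i(\ell(v))$; applying the comparison lemma for each $j<i$ yields $N_{G_j}(u)=N_{G_j}(v)$, i.e.\ $u$ and $v$ are (possibly isolated) twins in every $G_j$ with $j<i$, as required.

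The one genuinely delicate step --- and where I expect the proof to need care --- is the exact-weight claim inside the comparison lemma: transferring an edge across two vertices with a common length-$j$ suffix is immediate as an overlap, but ruling out a strictly shorter overlap (which would move the edge into a different $G_{j'}$ and break the argument) is precisely what the minimality of $\ov$ buys us. Without the ``minimum overlap'' convention this step would fail, so it is the linchpin of the whole lemma; by contrast, the two structural conclusions (i) and (ii) are then essentially bookkeeping on top of it.
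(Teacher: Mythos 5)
Your proof is correct and follows essentially the same route as the paper's: both arguments rest on transferring inner-affix equalities between two same-side vertices and on the minimality of $\ov$ to keep edges in the correct $G_j$, which is exactly the ``linchpin'' you identify. The only difference is organizational --- you package everything into one comparison lemma and use the ``common neighbor implies equal neighborhood'' characterization of disjoint unions of bicliques, whereas the paper argues via the absence of induced $P_4$'s and handles condition (ii) separately --- but the underlying reasoning is the same.
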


We define {\em the \Index number of $G$} as
the minimum size of a decomposition of $G$  that satisfies the HUB-rule, and denote it by $\ind(G)$.
Observe that a decomposition of a graph into matchings (i.e. each $G_i^w$ is a matching) satisfies the HUB-rule.
By K\"onig's Line Coloring Theorem, any bipartite graph $G$ can be decomposed into $\Delta(G)$ matchings,
where $\Delta(G)$ is the maximum degree of $G$.
Thus, $\ind(G) \in [\Delta(G)]$.
Clearly, a graph $G$ has $\ind(G) = 1$ if and only if $G$ is a disjoint union of bicliques.
The \Index number captures \readability in the sense that the \readability of a graph family is bounded
(by a uniform constant independent of the number of vertices)
if and only if its \Index number is bounded.
This is captured by the following theorem:
\begin{theorem}\label{thm:characterization}
Let $G$ be a bipartite graph. Then $\ind(G)\leq \readabilityParameter(G)\leq 2^{\ind(G)}-1$.
\end{theorem}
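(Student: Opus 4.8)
The plan is to prove the two inequalities separately, with the lower bound being immediate and the upper bound requiring an inductive string construction.

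\medskip

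\noindent\textbf{Lower bound.} I would derive $\ind(G)\le \readabilityParameter(G)$ directly from \Cref{lem:hubrule}. Take an overlap labeling $\ell$ of $G$ of minimum length $\readabilityParameter(G)$ and consider its $\ell$-decomposition $w(e)=\ov_\ell(e)$. Every value $\ov_\ell(e)$ lies in $\{1,\dots,\readabilityParameter(G)\}$, so $w$ is a decomposition of size at most $\readabilityParameter(G)$, and by \Cref{lem:hubrule} it satisfies the HUB-rule. Since $\ind(G)$ is the minimum size of a HUB-rule decomposition, $\ind(G)\le \readabilityParameter(G)$.

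\medskip

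\noindent\textbf{Upper bound.} Fix a HUB-rule decomposition $w$ of size $k=\ind(G)$ and write $G_i=G^w_i$. I would construct, by induction on $k$, an overlap labeling of $G$ in which every string has length exactly $2^k-1$. To make the induction close, I strengthen the statement with an invariant: the labeling assigns equal strings to any two vertices \emph{on the same side} ($V_s$ or $V_p$) that have identical weighted neighborhoods (the same neighbors, joined by edges of the same weight). In the base case $k=1$, the graph $G=G_1$ is a disjoint union of bicliques; I label all vertices of each biclique by a common fresh character, all isolated vertices of $V_s$ by one further fresh character and all isolated vertices of $V_p$ by another. Each string has length $1=2^1-1$, the invariant holds, and overlaps occur exactly along the biclique edges.

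\medskip

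\noindent\textbf{Inductive step.} Let $H=\bigcup_{i<k}G_i$. Its restriction $w|_H$ is a HUB-rule decomposition of size $k-1$: condition (i) is inherited level by level, and condition (ii) constrains only strictly lower levels, which survive the restriction. By induction I obtain a labeling $\ell'$ of $H$ of length $L_{k-1}=2^{k-1}-1$ satisfying the invariant. The key structural fact, supplied by HUB condition (ii), is that for each level-$k$ biclique $B_t=(S_t,P_t)$ the vertices of $S_t$ are non-isolated twins in $G_k$ and hence weighted-twins throughout $H$; likewise for $P_t$. Thus the invariant forces $\ell'$ to be constant on $S_t$ and on $P_t$, yielding well-defined strings $s_t,p_t$ of length $L_{k-1}$. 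With a fresh separator $\sigma_t$ per biclique I then set $\ell(u)=\ell(v)=p_t\cdot\sigma_t\cdot s_t$ for all $u\in S_t$ and $v\in P_t$, while for a vertex isolated in $G_k$ I set $\ell(u)=J\cdot\sigma\cdot\ell'(u)$ (for $u\in V_s$) or $\ell(v)=\ell'(v)\cdot\sigma\cdot J$ (for $v\in V_p$), where $J$ is a block of globally fresh characters and $\sigma$ a fresh separator, both chosen constant on weighted-twin classes. Every string now has length $2L_{k-1}+1=2^k-1$. In the verification, $\ell'(u)$ always sits at the inner end of $\ell(u)$ (the suffix end for $V_s$, the prefix end for $V_p$), so overlaps of length at most $L_{k-1}$ reproduce exactly the overlaps of $\ell'$; this preserves every $H$-edge and creates no short overlap between $H$-non-neighbors. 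Each level-$k$ biclique edge is realized because $u\in S_t$ and $v\in P_t$ receive identical strings, while the freshness of the separators and junk blocks prevents any longer (``medium'' or full-length) overlap between vertices of different bicliques or involving a $G_k$-isolated vertex, so no spurious edge appears. The same analysis shows $\ell$ inherits the invariant, closing the induction.

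\medskip

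\noindent The step is where the difficulty lies. The naive idea of merely appending characters to realize the level-$k$ edges fails, because altering the suffix of a $V_s$-string destroys its existing overlaps; the doubling to length $2^k-1$ and the collapse of each level-$k$ biclique to a single common full-length string are precisely what avoid this, and they are available only because HUB condition (ii) forces the biclique sides to be weighted-twins that already carry identical lower-level labels $s_t,p_t$ under the invariant. The main bookkeeping burden is ruling out spurious long overlaps, which I would handle by comparing the forced positions of the fresh separator characters in any hypothetical alignment.
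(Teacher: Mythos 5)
Your proposal is correct and follows essentially the same route as the paper: the lower bound is the paper's verbatim argument via \Cref{lem:hubrule}, and the upper bound is the paper's proof of \Cref{lem:generalized-BM} --- iterating the Braga--Meidanis doubling operation over the levels $G_1,\dots,G_k$, using HUB condition (ii) to guarantee that each level-$i$ biclique side consists of twins at all lower levels and hence already carries a common label. Your explicit ``weighted-twin'' invariant and the junk-padding of $G_k$-isolated vertices are just more careful statements of what the paper leaves implicit (the paper phrases the invariant as ``twins in $G_1\cup\dots\cup G_{i+1}$ get equal labels'' and pads lengths silently), so there is nothing substantively different to report.
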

In the remainder of this section, we will prove this theorem.
The first inequality directly follows from Lemma~\ref{lem:hubrule} because, by definition of \readability, there exists an overlap labeling $\ell$ of length $\readabilityParameter(G)$. Then the $\ell$-decomposition of $G$ is of size $\readabilityParameter(G)$ and satisfies the HUB-rule, implying $\ind(G)\leq \readabilityParameter(G)$.
To prove the second inequality, we will need to show:
\begin{restatable}{lemma}{restategeneralizedBM}\label{lem:generalized-BM}
Let $w$ be a decomposition of size $k$ satisfying the HUB-rule of a bipartite graph $G$.
Then there is an overlap labeling of $G$ of length $2^k-1$.
\end{restatable}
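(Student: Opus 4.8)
The plan is to prove the statement by induction on $k$, with the inductive step peeling off the top weight class $G^w_k$ and recursing on the remaining levels. For the base case $k=1$, the HUB-rule forces $G = G^w_1$ to be a disjoint union of bicliques, and such a graph is realized by strings of length $2^1-1=1$: assign a distinct character to each biclique, give every vertex of that biclique (in either part) this character, and give each isolated vertex its own private character. Then a length-$1$ suffix of a $V_s$-label equals a length-$1$ prefix of a $V_p$-label exactly when the two vertices lie in a common biclique, i.e.\ exactly when they are adjacent.

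For the inductive step, set $m = 2^{k-1}-1$ and let $G'$ be the bipartite graph on the same vertex set whose edges are those of weight at most $k-1$, equipped with the restricted decomposition $w'$. I would first check that $w'$ still satisfies the HUB-rule (condition (i) is immediate, and condition (ii) for $i \le k-1$ is just the original condition), so the induction hypothesis yields an overlap labeling $\ell'$ of $G'$ of length $m$; by the padding remark we may take all labels to have length exactly $m$, with $\ell'(u)$ serving as the inner suffix for $u \in V_s$ and $\ell'(v)$ as the inner prefix for $v \in V_p$.

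The crucial step --- and the one I expect to be the main obstacle --- is to exploit condition (ii) to assign a single coherent recursive label to each side of each top-level biclique. Because $G^w_k$ is a disjoint union of bicliques, any two vertices $u,u'$ in the same part $S_B$ of a nontrivial biclique $B$ are non-isolated twins in $G^w_k$; condition (ii) then makes them twins in every $G^w_j$ with $j<k$, hence weighted twins, and in particular ordinary twins in $G'$. I would post-process $\ell'$ so that twins in $G'$ receive identical labels --- this preserves the overlap-labeling property, since equal-neighborhood vertices of the same part may share a string --- and thereby obtain a common label $\alpha_B$ for all of $S_B$ and a common label $\beta_B$ for all of $P_B$.

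Finally I would build $\ell$ of length $N = 2m+1 = 2^k-1$ as follows. For each nontrivial biclique $B$ of $G^w_k$, choose a fresh character $A_B$ (not occurring in $\ell'$ and distinct across bicliques) and assign the common string $\beta_B \cdot A_B \cdot \alpha_B$ to \emph{every} vertex of $B$; for each vertex isolated in $G^w_k$ assign its $\ell'$-label sandwiched between private fresh characters, so that its inner side still reads $\ell'$. Then every overlap of length at most $m$ reproduces exactly the overlaps of $\ell'$, so all weight-$<k$ edges are realized correctly; every vertex of $S_B$ and every vertex of $P_B$ receive identical strings and hence overlap by the full length $N$, realizing the weight-$k$ biclique edges; and any overlap of length exceeding $m$ must align the fresh middle character of some $V_s$-vertex against a non-matching character in the corresponding $V_p$-prefix, ruling out all spurious long overlaps between vertices not sharing a top-level biclique. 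This shows $\ell$ is an overlap labeling of $G$ of length $2^k-1$, completing the induction.
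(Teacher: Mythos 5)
Your proof is correct and follows essentially the same construction as the paper's: a fresh character per top-level biclique sandwiched between the recursively built lower-level labels (the Braga--Meidanis operation), with the hierarchical condition of the HUB-rule guaranteeing that each side of a biclique carries a uniform inner label, and the fresh middle character blocking spurious long overlaps. The only cosmetic difference is that you recurse top-down and enforce equal labels on twins by post-processing, whereas the paper builds the labeling bottom-up from $G^w_1$ and maintains ``all twins receive identical labels'' as an inductive invariant.
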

The second inequality of \Cref{thm:characterization}
follows directly by choosing a minimum decomposition
satisfying the HUB-rule, in which case $k=\ind(G)$.
Thus, it only remains to prove \Cref{lem:generalized-BM}.

We now define the labeling $t$ that is used to prove \Cref{lem:generalized-BM}.
Our construction of the labeling applies the following operation due to Braga and Meidanis~\cite{BM02}.
Given two vertices $u \in V_s$ and $v \in V_p$, a labeling $t$, and a filler character $a$ not used by $t$,
the {\it BM operation} transforms $t$
by relabeling both $u$ and $v$ with $t(v) \cdot a \cdot t(u)$.

We start by labeling $G_1$ as follows: each biclique $B$ in $G_1$ gets assigned a unique character $a_B$, and
each node $v$ in a biclique $B$ gets label $t(v)=a_B$.
Next, for $i\in[k-1],$ we iteratively construct a labeling of $G_1\cup\dots\cup G_{i+1}$ from a labeling $t$ of $G_1\cup\dots\cup G_i$.
We show by induction that the constructed labeling has an additional property that all twins in $G_1\cup\dots\cup G_{i+1}$
have the same labels and that the length of the labeling is $2^{i+1} - 1$.
Observe that the labeling of $G_1$ satisfies this property.

We choose a unique (not previously used) character $a_B$ for each biclique $B$ of $G_{i+1}$.
If $B$ consists of a single vertex $v$,
then we assign to $v$ the label $a_B\cdot t(v)$ if $v\in V_s$, and
$t(v)\cdot a_B$ if $v\in V_p$. Otherwise,
since $w$ satisfied the HUB-rule, all vertices in $B \cap V_s$ are twins in $G_1\cup\dots\cup G_i$ and,
by the induction hypothesis, are assigned the same labels in $t$.
Analogously, $t$ will assign the same labels to all nodes in $B \cap V_p$.
Consider an arbitrary edge $(u,v)$ in $B$.
We apply the BM operation with character $a_B$ to $(u,v)$ and assign the resulting label $t(v) \cdot a_B \cdot t(u)$ to all nodes in $B$.
This completes the construction of labeling of $G_1\cup\dots\cup G_{i+1}$. Observe that it assigns the same labels to all twins in $G_1\cup\dots\cup G_{i+1}$,
and that the length is $2^{i+1} - 1$.
To complete the proof of \Cref{thm:characterization},
we show in the \apdx that the final labeling is an overlap labeling of $G$.

Note that if $w$ is a decomposition into matchings,
then our labeling algorithm behaves identically to the Braga-Meidanis (BM) algorithm \cite{BM02}.
However, in the case that $w$ is of size $o(\Delta(G))$,
our labeling algorithm gives a better bound than BM.
For example, for the $n \times n$ biclique, our algorithm gives a labeling of length $1$,
while BM gives a labeling of length $2^n - 1$.

\section{Lower and upper bounds on \readability}
In this section, we prove several lower and upper bounds on \readability,
making use of the characterizations of the previous section.

\subsection{Almost all graphs have \readability $\Omega(n/\log n)$}\label{sec:counting-lb}

In this subsection, we show that, in both the bipartite and directed graph models,
there exist graphs with \readability at least
$\Omega(n/\log n)$,
and that in fact almost all graphs have at least this \readability.
\begin{restatable}{theorem}{restatablebinary}\label{lem:binary}
Almost all graphs in $\bigraphs$ (and, respectively, $\digraphs$) have \readability $\Omega(n/\log n)$.
When restricted to a constant sized alphabet,
almost all graphs in $\bigraphs$ (and, respectively, $\digraphs$) have \readability $\Omega(n)$.
\end{restatable}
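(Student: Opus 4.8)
The plan is to use a counting argument that compares the number of graphs in $\bigraphs$ against the number of distinct overlap labelings of a given length. The key observation is that the number of graphs on a fixed vertex set grows like $2^{\Theta(n^2)}$, whereas the number of labelings achievable with strings of length $r$ over an alphabet of size $\sigma$ is only $\sigma^{O(nr)}$. If $r$ is too small, there are strictly fewer labelings than graphs, so some graph has no overlap labeling of length $r$; pushing this quantitatively will force $r = \Omega(n/\log n)$ for the unbounded-alphabet case and $r = \Omega(n)$ for the constant-alphabet case.

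\smallskip

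First I would count the graphs: $|\bigraphs| = 2^{n^2}$, since a balanced bipartite graph on $[n] \cup [n]$ is determined by an arbitrary subset of the $n^2$ possible edges, and all of these graphs are distinct. Next I would upper-bound the number of bipartite graphs that admit an overlap labeling of length $r$ over an alphabet $\Sigma$ of size $\sigma$. A labeling assigns one string of length $r$ to each of the $2n$ vertices, so there are at most $(\sigma^r)^{2n} = \sigma^{2nr}$ labelings; since the graph is determined by its labeling (the edge set is read off from which suffix/prefix overlaps occur), the number of graphs of \readabilities at most $r$ is at most $\sigma^{2nr}$. The crucial subtlety is the alphabet: in the unbounded-alphabet model I would observe that without loss of generality one never needs more than $2nr$ distinct characters (the total number of character positions), so I can take $\sigma \le 2nr$ and bound the count by $(2nr)^{2nr}$. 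By the reduction of Braga and Meidanis cited in the preliminaries, I could alternatively pass to a binary alphabet at the cost of a factor of $\log \sigma$ in the length, which cleanly separates the two regimes.

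\smallskip

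Then I would set up the inequality forcing a labeling to exist for almost all graphs to fail. For the constant-alphabet statement, if almost all graphs had \readabilities at most $r$, then $\sigma^{2nr} \ge (1-o(1)) \cdot 2^{n^2}$, giving $2nr \log_2 \sigma \ge (1-o(1)) n^2$, hence $r \ge (1-o(1)) \frac{n}{2\log_2\sigma} = \Omega(n)$. For the unbounded-alphabet statement, substituting $\sigma \le 2nr$ gives $2nr \log_2(2nr) \ge (1-o(1))n^2$; since $\log_2(2nr) = O(\log n)$ whenever $r = \mathrm{poly}(n)$, this yields $r = \Omega(n/\log n)$. To make ``almost all'' precise I would show that the fraction of graphs with \readabilities below the threshold tends to $0$: the number of low-readability graphs is $\sigma^{2nr} = 2^{o(n^2)}$ when $r = o(n/\log n)$, which is a vanishing fraction of $2^{n^2}$. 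The directed-graph case follows from the bipartite case via the bijection $\psi$ of \Cref{thm:transformation}, which changes \readabilities only by a constant factor and thus preserves both the $\Omega(n/\log n)$ and $\Omega(n)$ bounds.

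\smallskip

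The main obstacle I anticipate is handling the alphabet size rigorously in the unbounded case, since a naive bound $\sigma^{2nr}$ with $\sigma$ unbounded gives nothing. The argument hinges entirely on the fact that an optimal labeling can be assumed to use at most a polynomial (indeed linear-in-$nr$) number of distinct symbols, so that $\log_2 \sigma = O(\log n)$ and the extra $\log$ factor in the denominator appears exactly once. I would need to justify carefully that renaming characters does not change which overlaps occur, so that restricting to a canonical alphabet of size at most $2nr$ loses no generality; everything else is a routine comparison of exponents.
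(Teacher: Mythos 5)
Your proposal is correct and follows essentially the same counting argument as the paper: compare $|\bigraphs|=2^{n^2}$ against the at most $\sigma^{2nr}$ labelings of length $r$, and in the unbounded-alphabet case invoke the same key observation that an optimal labeling uses at most $2nr$ distinct characters, so $\log_2\sigma=O(\log n)$. The only (cosmetic) differences are that the paper handles the unbounded case by first converting to a binary alphabet via the Braga--Meidanis reduction (length $O(r\log n)$) and reusing the constant-alphabet count, whereas you substitute $\sigma\le 2nr$ directly, and that you derive the digraph case from the bijection $\psi$ of \Cref{thm:transformation} instead of repeating the count; both variants are sound.
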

\begin{proof}[Proof (constant sized alphabet case)]
We prove the lemma by a counting argument.
Since there are $n^2$ pairs of nodes in $[n]^2$ that can form edges in a graph in $\bigraphs$, the size of $\bigraphs$ is $2^{n^2}$.
Let $a$ be the size of the alphabet.
The number of labelings of $2n$ nodes with strings of length $s$ is at most $a^{2ns}$.
In particular, labelings of length 
$s = n/(3\log a)$ can generate
no more than $a^{2n^2/(3\log a)} = 2^{2n^2/3}$ bipartite graphs, which is in $o(2^{n^2})$.
Consequently, almost all graphs in $\bigraphs$ have \readability $\Omega(s) = \Omega(n/\log a) = \Omega(n)$.
The proof for $\digraphs$ is analogous and is omitted.
The proof for variable sized alphabets is given in the \apdx.
\end{proof}

\subsection{Distinctness and a graph family with \readability $\Omega(n)$ }\label{sec:hadamard-graphs}
In this subsection, we will give a technique for proving lower bounds and use it to show a family of graphs with \readability $\Omega(n)$.
For any two vertices $u$ and $v$, the {\em distinctness} of $u$ and $v$ is defined as $DT(u,v) = \max\{|N(u) \setminus N(v)|, |N(v) \setminus N(u)|\}.$
The {\em distinctness} of a bipartite graph $G$, denoted by $DT(G)$,
is defined as the minimum distinctness of any pair of vertices that belong to the same part of the bipartition.
The following lemma relates the distinctness and the \readability of graphs that are not matchings
(for a matching, the \readability is 1, provided that it has at least one edge, and 0 otherwise).
\begin{lemma}\label{cor:lb_distinctness}
For every bipartite graph $G$ that is not a matching, $\readabilityParameter(G)\ge DT(G)+1$.
\end{lemma}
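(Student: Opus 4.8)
The plan is to prove the lower bound $\readabilityParameter(G) \ge DT(G) + 1$ by showing that any overlap labeling $\ell$ of $G$ forces the label length to exceed the distinctness of any pair of same-side vertices, in particular the minimum-distinctness pair that defines $DT(G)$. The key observation is that the overlaps of a vertex $u \in V_s$ with the vertices of $V_p$ are controlled entirely by the suffixes of $\ell(u)$, and symmetrically on the other side; so the pattern of edges incident to $u$ is encoded in the string $\ell(u)$ of length $r = \len(\ell)$. If two vertices have very different neighborhoods, their labels must differ substantially, and this difference costs length.

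\textbf{Key steps.} First I would fix an overlap labeling $\ell$ of $G$ with $\len(\ell) = \readabilityParameter(G) =: r$, and take any two vertices $u, u'$ in the same part, say $V_s$. I want to relate $|N(u) \setminus N(u')|$ to $r$. For each neighbor $v \in N(u)$, we have $\ov_\ell(u,v) > 0$, meaning some nonempty suffix of $\ell(u)$ equals the corresponding prefix of $\ell(v)$; for the overlap value $i = \ov_\ell(u,v)$, the vertex $v$ is ``picked out'' by the length-$i$ suffix $\suf_i(\ell(u))$. The natural idea is to charge each vertex $v \in N(u) \setminus N(u')$ to a distinct overlap length $i \in [r]$, via the value $\ov_\ell(u,v)$, and argue that two such neighbors cannot share the same overlap length. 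Concretely, if $v_1, v_2 \in N(u)\setminus N(u')$ had $\ov_\ell(u,v_1) = \ov_\ell(u,v_2) = i$, then $\suf_i(\ell(u)) = \pre_i(\ell(v_1)) = \pre_i(\ell(v_2))$; but since $u' \notin N(v_1)$ (because $v_1 \notin N(u')$), the prefix $\pre_i(\ell(v_1))$ is \emph{not} a suffix of $\ell(u')$, and likewise for $v_2$ — which shows the prefixes of $v_1, v_2$ agree on their first $i$ characters but tells us something uniform about $u'$. This collision argument needs care: what I really want is an injection from $N(u)\setminus N(u')$ into the set of overlap-length values, and then possibly a separate injection in the reverse direction, so that $|N(u)\setminus N(u')| \le r$ and $|N(u')\setminus N(u)| \le r$, giving $DT(u,u') \le r$, hence $DT(G) \le r$, i.e. $r \ge DT(G)$. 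Then I would squeeze out the extra $+1$ using the hypothesis that $G$ is not a matching.

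\textbf{The main obstacle} will be establishing the injectivity cleanly and extracting the final $+1$. The subtle point is that $\ov_\ell$ is defined as the \emph{minimum} overlap, so two neighbors of $u$ could in principle realize the same minimum overlap length, and I need the non-matching hypothesis to close the gap between $DT(G) \le r$ and $DT(G)+1 \le r$. The cleanest route is probably to connect this back to the \Index machinery: since each $G^w_i$ in any HUB-decomposition is a disjoint union of bicliques, vertices inside the same biclique are twins on that level, and distinctness measures how many levels are needed to separate neighborhoods. I would try to show that within a single overlap length $i$, the neighbors of $u$ at exactly that overlap value that are \emph{not} neighbors of $u'$ must all be separated, forcing distinct contributions; combined with the fact that a graph that is not a matching has some vertex of degree at least $2$ (guaranteeing the relevant structure is nontrivial), this yields the strict improvement to $DT(G)+1$. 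I expect the argument to reduce, after setup, to a short counting or pigeonhole claim, with the non-matching condition supplying exactly the one extra unit of length.
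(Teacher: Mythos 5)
There is a genuine gap. The core of your plan is an injection from $N(u)\setminus N(u')$ into the set of overlap lengths $[r]$, for an arbitrary (or for the minimizing) same-side pair $u,u'$. That injection does not exist: if $v_1,v_2\in N(u)\setminus N(u')$ both satisfy $\ov_\ell(u,v_1)=\ov_\ell(u,v_2)=i$, all you obtain is $\pre_i(\ell(v_1))=\suf_i(\ell(u))=\pre_i(\ell(v_2))$, and the fact that $u'$ is adjacent to neither $v_1$ nor $v_2$ yields no contradiction. Concretely: let $u$ be adjacent to $v_1,v_2$, let $u'$ be isolated, and label $\ell(u)=\ell(v_1)=\ell(v_2)=a$, $\ell(u')=b$; then $r=1$ but $|N(u)\setminus N(u')|=2$. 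This also shows that the universal statement you are implicitly aiming for ($DT(u,u')\le r-1$ for \emph{every} same-side pair) is simply false; since $DT(G)$ is a \emph{minimum} over pairs, the lemma only requires exhibiting one good pair, and the entire difficulty lies in choosing it. You flag the injectivity issue and the ``$+1$'' as obstacles, but the fallback you sketch (separating neighbors within a single overlap length) runs into the same counterexample.

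The paper's proof proceeds exactly through the route you call ``cleanest'' but do not carry out: it invokes \Cref{thm:characterization} to reduce to showing $DT(G)\le \ind(G)-1$, takes a minimum HUB-decomposition $w$ of size $h=\ind(G)$, and splits into two cases. If every level $G^w_i$ is a matching, then $\Delta(G)\le h$, and since $G$ is not a matching there exist two vertices $u,v$ with a common neighbor, whence $DT(G)\le DT(u,v)\le\Delta(G)-1\le h-1$. Otherwise, let $j$ be the \emph{largest} level that is not a matching and pick $u,v$ with a common neighbor in $G^w_j$: they are non-isolated twins in $G^w_j$, hence by the HUB-rule twins in all levels below $j$, so every neighbor of $u$ that is not a neighbor of $v$ must be matched to $u$ in a level above $j$ --- and all those levels are matchings by maximality of $j$, giving $DT(u,v)\le h-j\le h-1$. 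The two ingredients missing from your sketch are precisely (i) selecting the pair through a common neighbor rather than taking an arbitrary or minimizing pair, and (ii) the maximality-of-$j$ argument that converts the hierarchical twin condition into the count $h-j$. The ``$+1$'' then falls out of $j\ge 1$ (respectively, of the shared neighbor in the all-matchings case); it is not a separate pigeonhole step.
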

\begin{proof}
By Theorem~\ref{thm:characterization},
it suffices to show that $DT(G)\le \ind(G)-1$.
Let $h = \ind(G)$, let $w:E(G)\to [h]$ be a minimum decomposition of $G$ satisfying the HUB-rule, and
consider the graphs $G_i = G_i^w$, for $i\in [h]$. We need to show that $DT(G)\le h-1$.
Suppose first that each $G_i$ is a matching. Then, since $w$ is a decomposition of $G$,
we have $\Delta(G)\le h$. Moreover, since $G$ is not a matching, it has a pair of distinct vertices, say $u$ and $v$, with a common neighbor,
which implies $DT(G)\le DT(u,v)\le \Delta(G)-1\le h-1$.

Suppose now that there exists an index $j\in [h]$ such that $G_j$ is not a matching, and let $j$ be the maximum such index.
Then, there exist two distinct vertices in $G$, say $u$ and $v$, that have a common neighbor in $G_j$, and therefore
belong to the same biclique of $G_j$.
It follows that $u$ and $v$ are non-isolated twins in $G_j$. Since
$w$ is satisfies the HUB-rule, this implies that
$u$ and $v$ are twins in each $G_i$ with $i\in [j-1]$.
Consequently, for each vertex $x$ in $G$ adjacent to $u$ but not to $v$, the unique $G_i$ with
$(u,x)\in E(G_i)$ satisfies $i>j$. By the choice of $j$,
each such $G_i$ is a matching, and hence there can be at most $h-j$ such vertices $x$. Thus $|N(u)\setminus N(v)|\le h-j$ and similarly
$|N(v)\setminus N(u)|\le h-j$, which implies the desired inequality $DT(G)\le DT(u,v)\le h-j\le h-1$.
\end{proof}

While the distinctness is a much simpler graph parameter than the \Index number, simplicity comes with a price. Namely, the distinctness does not share the nice feature of the \Index number, that of being bounded on exactly the same sets of graphs as the \readability. In Section~\ref{sec:treelb}, we show the existence of graphs (specifically, trees) of distinctness $1$
and of arbitrary large \readability.

We now introduce a family of graphs, inspired by the Hadamard error correcting code, and apply \Cref{cor:lb_distinctness}
to show that their \readability is at least linear in the number of nodes.
We define
$H_k$ as the bipartite graph with vertex sets $V_s = \{v_s\mid v\in \{0,1\}^k \setminus \{0^k\}\}$
and $V_p = \{v_p\mid v\in \{0,1\}^k \setminus \{0^k\}\}$
 and edge set
 $$E(H_k) = \Big\{(v_s,v_p) \in V_s \times V_p ~~{\mid}~~ \sum_{i = 1}^k v_s[i]v_p[i] \equiv 1\pmod 2\Big\}\,.$$
In other words, each vertex has a non-zero $k$-bit codeword vector associated with it and two vertices are adjacent
if the inner product of their codewords is odd.
Let $n = 2^k$. Graph
$H_k$ has $2(n-1)$ vertices, all of degree $n/2$, and thus $(n-1)n/2$ edges.
\Cref{fig:hadamard} illustrates $H_3$.

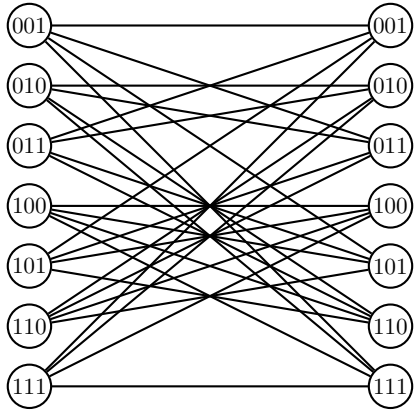
\begin{figure}[t]
        \centering
                \begin{tikzpicture}[-,shorten >=0pt,auto,node distance=6cm,
                    thick, scale=0.8, every node/.style={scale=0.8}]
            \newcommand\hyshift{5cm}
            \tikzstyle{main node}=[circle,draw, inner sep=1pt, minimum width=5pt]
            \node[main node] (l001) {001};
            \node[main node] (r001) [right of=l001] {001};
            \node[main node] (l010) [below of=l001, yshift=\hyshift] {010};
            \node[main node] (r010) [right of=l010] {010};
            \node[main node] (l011) [below of=l010, yshift=\hyshift] {011};
            \node[main node] (r011) [right of=l011] {011};
            \node[main node] (l100) [below of=l011, yshift=\hyshift] {100};
            \node[main node] (r100) [right of=l100] {100};
            \node[main node] (l101) [below of=l100, yshift=\hyshift] {101};
            \node[main node] (r101) [right of=l101] {101};
            \node[main node] (l110) [below of=l101, yshift=\hyshift] {110};
            \node[main node] (r110) [right of=l110] {110};
            \node[main node] (l111) [below of=l110, yshift=\hyshift] {111};
            \node[main node] (r111) [right of=l111] {111};

            \path[]
            (l001) edge (r001)
            (l001) edge (r011)
            (l001) edge (r101)
            (l001) edge (r111)
            (l010) edge (r010)
            (l010) edge (r110)
            (l010) edge (r011)
            (l010) edge (r111)
            (l011) edge (r001)
            (l011) edge (r010)
            (l011) edge (r101)
            (l011) edge (r110)
            (l100) edge (r100)
            (l100) edge (r101)
            (l100) edge (r110)
            (l100) edge (r111)
            (l101) edge (r001)
            (l101) edge (r011)
            (l101) edge (r110)
            (l101) edge (r100)
            (l110) edge (r010)
            (l110) edge (r011)
            (l110) edge (r100)
            (l110) edge (r101)
            (l111) edge (r001)
            (l111) edge (r100)
            (l111) edge (r010)
            (l111) edge (r111)
            ;
        \end{tikzpicture}
     \caption{The graph $H_3$.
            The strings on the vertices correspond to the $k$-bit codeword vectors. }
        \label{fig:hadamard}
\end{figure}
In the \apdx,
we show that every pair of vertices in the same part of the bipartition of $H_k$ has exactly $n/4$ common neighbors. This implies that the distinctness of $H_k$ is $n/4$. Combining this with
\Cref{cor:lb_distinctness}, we obtain
the following theorem.
\begin{restatable}{theorem}{restatablehadamard}
	\label{thm:hadamard}
	$\readabilityParameter(H_k) \geq n/4$+1.
\end{restatable}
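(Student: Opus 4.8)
The plan is to invoke \Cref{cor:lb_distinctness}, which reduces the claim to showing that $H_k$ is not a matching and that its distinctness equals $n/4$. The non-matching condition is immediate: for $k\ge 2$ every vertex has degree $n/2 = 2^{k-1}\ge 2$, so some pair of edges shares an endpoint. It then remains only to compute $DT(H_k)$.

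To do this, I would fix two distinct vertices in the same part, say $u_s,v_s\in V_s$ with distinct nonzero codewords $u,v\in\{0,1\}^k$, and analyze the $\mathbb{F}_2$-linear map $\phi\colon\{0,1\}^k\to\{0,1\}^2$ given by $\phi(w)=\big(\langle u,w\rangle,\langle v,w\rangle\big)$, all inner products taken modulo $2$. The crucial fact is that two distinct \emph{nonzero} vectors over $\mathbb{F}_2$ are automatically linearly independent, so $\phi$ is surjective and each of its four fibers has size exactly $2^{k-2}=n/4$.

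Translating fibers back into the graph, a nonzero codeword $w$ yields a vertex $w_p$ that is a common neighbor of $u_s,v_s$ precisely when $\phi(w)=(1,1)$, lies in $N(u_s)\setminus N(v_s)$ when $\phi(w)=(1,0)$, and lies in $N(v_s)\setminus N(u_s)$ when $\phi(w)=(0,1)$. Since $\phi(0^k)=(0,0)$, removing the forbidden all-zero codeword from the vertex set does not delete anything from these three fibers, so $|N(u_s)\setminus N(v_s)| = |N(v_s)\setminus N(u_s)| = n/4$ and hence $DT(u_s,v_s)=n/4$. The same computation applies verbatim to pairs in $V_p$, so every same-part pair has distinctness exactly $n/4$, giving $DT(H_k)=n/4$. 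Plugging this into \Cref{cor:lb_distinctness} yields $\readabilityParameter(H_k)\ge n/4+1$.

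The step I would treat most carefully is the surjectivity of $\phi$: it rests on linear independence of $u$ and $v$, which over $\mathbb{F}_2$ uses both that the codewords are nonzero and that they are distinct. The one bookkeeping point is the exclusion of $0^k$ from the vertex set; because $\phi(0^k)=(0,0)$ falls into the single fiber that is irrelevant to the distinctness count, no off-by-one correction is needed, and the three relevant fibers retain their full size $n/4$.
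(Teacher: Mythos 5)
Your proposal is correct and follows essentially the same route as the paper: both reduce to \Cref{cor:lb_distinctness} and then use linear algebra over $GF(2)$ (the linear independence of two distinct nonzero vectors) to show every same-part pair has distinctness $n/4$. The only cosmetic difference is that you count the fibers $(1,0)$ and $(0,1)$ directly, whereas the paper counts common neighbors (the $(1,1)$ fiber, via the rank of the $2\times k$ system) and subtracts from the degree $n/2$.
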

This lower bound also translates to directed graphs: applying \Cref{thm:transformation},
there exists digraphs of \readability $\Omega(n)$.
A major open question is: Do there exist graphs that have exponential \readability?
We conjecture that they do, and that the graph family $H_k$ has exponential \readability.
However, since distinctness is $O(n)$,
we note that \Cref{cor:lb_distinctness} is insufficient for proving
stronger than $\Omega(n)$ lower bounds on the \readability.

\subsection{Trees}\label{sec:treelb}
\begin{sloppypar}
	The purely graph theoretic characterization of \readability given by \Cref{thm:tree}
	allows us to derive a sharp upper bound on the \readability of trees.
	Recall that the {\it eccentricity} of a vertex $u$ in a connected graph $G$ is defined as
    ${\it ecc}_G(u) = \max_{v\in V(G)}{\it dist}_G(u,v)$, where ${\it dist}_G(u,v)$ is the number of edges in a shortest path from $u$ to $v$.
	The {\it radius} of a graph $G$ is defined as the minimum eccentricity of a vertex in $G$, that is
	${\it radius}(G) = \min_{u\in V(G)}\max_{v\in V(G)}{\it dist}_G(u,v)$.
\end{sloppypar}

\begin{restatable}{theorem}{restatabletreeradius}
	\label{thm:tree-radius}
For every tree $T$, $\readabilityParameter(T)\le {\it radius}(T)$, and this bound is sharp. More precisely, for every $k\ge 0$ there exists a tree $T$ such that
$\readabilityParameter(T) = {\it radius}(T) = k$.
\end{restatable}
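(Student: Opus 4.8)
The plan is to treat the two assertions separately, using the $P_4$-rule characterization of \Cref{thm:tree} as the only tool, so that no reference to strings is needed.

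For the upper bound $\readabilityParameter(T)\le {\it radius}(T)$ I would root $T$ at a center $z$ (a vertex with ${\it ecc}_T(z)={\it radius}(T)$) and define the decomposition $w(e)=\max\{{\it dist}_T(z,x),{\it dist}_T(z,y)\}$ for each edge $e=\{x,y\}$; that is, $w(e)$ is the depth of the deeper endpoint. Then $1\le w(e)\le {\it radius}(T)$, so $w$ has size ${\it radius}(T)$, and it remains to check the $P_4$-rule. The point is that it holds \emph{vacuously}: along any induced four-vertex path $a-b-c-d$ the depths change by $\pm1$ at each step and attain a unique minimum at the lowest common ancestor of $a$ and $d$, so a short case analysis on the position of this apex shows that the weight of the middle edge is always strictly smaller than the weight of (at least) one of the two outer edges. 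Hence the hypothesis of the $P_4$-rule is never satisfied, and \Cref{thm:tree} gives $\readabilityParameter(T)\le {\it radius}(T)$.

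For sharpness I would take, for each $k\ge 2$, the complete $N$-ary tree $R_k$ of depth $k$ (every vertex at depth $<k$ has exactly $N$ children and all leaves are at depth $k$), with $N:=k$; the cases $k=0,1$ are a single vertex and a star. Its center is the root, so ${\it radius}(R_k)=k$, and the upper bound already gives $\readabilityParameter(R_k)\le k$. Everything then reduces to the matching lower bound $\readabilityParameter(R_k)\ge k$, i.e.\ to showing that every decomposition $w$ satisfying the $P_4$-rule uses a value at least $k$.

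The engine for the lower bound is a forcing observation: if a vertex $v$ has two distinct children $c,c'$ that each have a child and $w(v,c)=w(v,c')=m$, then every edge from $c$ (and from $c'$) down to a child has weight strictly greater than $m$. This is immediate from the $P_4$-rule applied to the induced path $c-v-c'-g'$ with $g'$ a child of $c'$: if some such down-edge had weight $\le m$, the middle edge $(v,c')$ would be a maximum and the rule would force $m\ge m+(\text{that weight})>m$, impossible since weights are at least $1$. Assuming for contradiction that $w$ has maximum value $W\le k-1$, I would build a strictly increasing chain of weights down a root-to-leaf path by combining this with pigeonhole: the root has $N=k>W$ children, so two share an up-edge value $m_1\ge1$; descending into one of them, its $N$ down-edges all exceed $m_1$ yet lie in the set $\{m_1+1,\dots,W\}$, whose size is $<N$, so two of them are again equal, giving a value $m_2\ge m_1+1$; iterating at depths $0,1,\dots,k-2$ produces $m_1<m_2<\cdots<m_{k-1}$ with $m_t\ge t$, and the last application of the forcing observation at depth $k-2$ makes the incident leaf-edges have weight $>m_{k-1}\ge k-1$, hence $\ge k$, contradicting $W\le k-1$. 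This yields $\readabilityParameter(R_k)=k={\it radius}(R_k)$. I expect this lower bound to be the main obstacle: the naive candidate families (paths, complete binary trees) in fact have bounded readability because one can reuse small weights, so the construction must use branching large enough ($N\ge k$) that at every level pigeonhole forces two equal sibling edges; the delicate points to get right are that the forcing observation can be applied all the way down (each child used along the chain must itself have children, which holds since the chain stops one level above the leaves) and that the pigeonhole step still succeeds against the shrinking range $\{m_t+1,\dots,W\}$ of available values.
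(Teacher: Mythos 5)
Your proof is correct and follows essentially the same route as the paper: the upper bound uses the same BFS-level decomposition from a center vertex and shows the hypothesis of the $P_4$-rule is never triggered, and the lower bound uses the same pigeonhole-plus-forcing argument (two equal sibling edges force all deeper incident edges to be strictly heavier) to build a strictly increasing chain of $k$ weights down a root-to-leaf path. The only difference is the choice of extremal tree: you take the complete $k$-ary tree of depth $k$, while the paper uses a leaner recursive construction (a new root joined to the roots of $i$ copies of $T_{i-1}$, so the branching factor shrinks by one per level, exactly matching what each pigeonhole step requires); both choices work.
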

\begin{proof}
Let $T$ be a tree. If $T = K_1$ (the one-vertex tree), then ${\it radius}(T)= \readabilityParameter(T) = 0$ (note that assigning the empty string to the unique vertex of $v$ results in an overlap labeling of $T$). Now, let $T$ be of radius $r\ge 1$ and let $v\in V(T) $ be a vertex of $T$ of minimum eccentricity (that is,
${\it ecc}_T(v) = r$). Consider the distance levels of $T$ from $v$, that is, $V_i = \{w\in V(T)\mid {\it dist}_T(v,w) = i\}$ for $i\in \{0,1,\ldots, r\}$.
Also, for all $i \in [r]$, let $E_i$ be the set of edges in $T$ connecting a vertex in $V_{i-1}$ with a vertex in $V_{i}$.
Then $\{E_1,\ldots, E_r\}$ is a partition of $E(T)$ and the decomposition $w:E(T)\to [r]$ given by $w(e) = i$ if and only if $e\in E_i$
is well defined. We claim that $w$ satisfies the $P_4$-rule.
Let $P=(v_1,v_2,v_3,v_4)$ be an induced $P_4$ in $T$, and let $i = w(v_1,v_2)$, $j = w(v_2,v_3)$, $k = w(v_3,v_4)$.
Suppose that $j = \max\{i,j,k\}$. We may assume without loss of generality that $v_2\in V_{j-1}$ and $v_3\in V_j$.
Since $T$ is a tree, $v_2$ is the only neighbor of $v_3$ in $V_{j-1}$, which implies that $v_4\in V_{j+1}$ and consequently $k = j+1$, contrary to the assumption
$j = \max\{i,j,k\}$. Thus, the $P_4$-rule is trivially satisfied for $w$. By Theorem~\ref{thm:tree}, we have
$\readabilityParameter(T)\le \max_{e\in E(T)}w(e) = r = {\it radius}(T)$.

To show that for every $k\ge 0$ there exists a tree $T$ with $\readabilityParameter(T) = {\it radius}(T) = k$, we proceed by induction.
We will construct a sequence $\{(T_i,v_i)\}_{i\ge 0}$ where $T_i$ is a tree, $v_i$ is a vertex in $T_i$ with ${\it ecc}_{T_i}(v_i)\le i$,
the degree of $v_i$ in $T_i$  is $i$, and
$\readabilityParameter(T_i) = {\it radius}(T_i) = i$.
For $i = 0$, take $(T_0, v_0)  = (K_1, v_0)$ where $v_0$ is the unique vertex of $K_1$. This clearly has the desired properties.
For $i\ge 1$, take $i$ disjoint copies of $(T_{i-1},v_{i-1})$, say $(T_{i-1}^j,v_{i-1}^j)$ for $j \in [i]$, add a new vertex $v_{i}$, and join $v_{i}$ by an edge to each $v_{i-1}^j$ for $j \in [i]$. Let $T_i$ be the so constructed tree.
Clearly, the degree of $v_i$ in $T_i$  is $i$, and ${\it ecc}_{T_i}(v_i)\le 1+
{\it ecc}_{T_i}(v_{i-1}) \le 1+(i-1) = i$, which implies that ${\it radius}(T_i)\le i$.
On the other hand, we will show that $\readabilityParameter(T_i)\ge i$, which together with inequality $\readabilityParameter(T_i)\le {\it radius}(T_i)$ will imply the desired conclusion ${\it radius}(T_i) = \readabilityParameter(T_i) = i$. Suppose for a contradiction that $\readabilityParameter(T_i)<i$. Then,
by Lemma~\ref{lem:fourstrings}, there exists a decomposition $w$ of $T_i$ of size $i-1$
satisfying the $P_4$-rule. In particular, this implies $i\ge 2$. Since the degree of $v_i$ in $T_i$ is $i$, there exist two edges incident with $v_i$, say
$(v_i,v_{i-1}^j)$ and $(v_i,v_{i-1}^k)$ for some $j\neq k$ such that
$w(v_i,v_{i-1}^j) = w(v_i,v_{i-1}^k)$. Let $w_1$ denote this common value. Let $x$ be a neighbor of $v_{i-1}^j$ in $T_{i-1}^j$.
(Note that $x$ exists since $v_{i-1}^j$ is of degree $i-1\ge 1$ in $T_{i-1}^j$.)
Then, $(x,v_{i-1}^j, v_i, v_{i-1}^k)$ is an induced $P_4$ in $T_i$. We claim that
$w(x,v_{i-1}^j)>w_1$. Indeed, if $w(x,v_{i-1}^j)\le w_1$ then we have 
$\max\{w(x,v_{i-1}^j), w(v_{i-1}^j,v_i), w(v_i,v_{i-1}^k)\} = \max\{w(x,v_{i-1}^j), w_1, w_1\} = w_1$, while $w_1\ngeq w_1+w(x,v_{i-1}^j)$, contrary to the $P_4$-rule.
Since $x$ was an arbitrary neighbor of $v_{i-1}^j$ in $T_{i-1}^j$, we infer that
every edge $e$ in  $T_{i-1}^j$ incident with $v_{i-1}^j$ satisfies $w(e) >w_1$.
In particular, this leaves a set of at most $i-2$ different values that can appear on these $i-1$ edges (the value $w_1$ is excluded), and hence again
there must be two edges of the same weight, say $w_2$. Clearly, $w_2>w_1$ and $i>2$.
Proceeding inductively, we construct a sequence of edges $e_1, e_2, \ldots, e_{i}$ forming a path in $T_{i}$ from $v_i$ to a leaf and satisfying
$w_1<w_2<\ldots<w_i$, where $w_i = w(e_i)$. This implies that all the weights $w_1,\ldots, w_i$ are distinct, contrary to the fact that
the range of $w$ is contained in the set $[i-1]$. This contradiction shows that $\readabilityParameter(T_i)\ge i$ and completes the proof.
\end{proof}
Note that for every $k\ge 2$, the tree $T_k$ of radius $k$ constructed in the proof of Theorem~\ref{thm:tree}
has a pair of leaves in the same part of the bipartition and is therefore of distinctness $1$. This shows that the \readability of a graph cannot be upper-bounded by any function of its distinctness (cf.~\Cref{cor:lb_distinctness}).

\section{Conclusion}
In this paper, we define a graph parameter called \readability,
and initiate a study of its asymptotic behavior.
We give purely graph theoretic parameters (i.e., without reference to strings)
that are exactly (respectively, asymptotically) equivalent
to \readability for trees (respectively, $C_4$-free graphs);
however, for general graphs,
the \Index number is equivalent to \readability only in the sense that it is bounded on the same set of graphs.
While an $\ell$-decomposition always satisfies the HUB-rule,
the converse is not true.
For example, a decomposition of $P_4$ with weights $4,5,3$ satisfies the HUB-rule but cannot be
achieved by an overlap labeling (by \Cref{lem:fourstrings}).
For this reason, the upper bound given by \Cref{lem:generalized-BM} leaves a gap with the lower bound of \Cref{lem:hubrule}.
We are able to describe other properties that an $\ell$-decomposition must satisfy
(not included in the paper),
however, we are not able to exploit them to close the gap.
It is a very interesting direction to find other necessary rules
that would lead to a graph theoretic parameter that would more tightly match \readability on general graphs than the \Index number.

Consider $\readabilityParameter(n) = \max \setof{\readabilityParameter(D)}{D \text{ is a digraph on $n$ vertices}}$.
We have shown $\readabilityParameter(n) = \Omega(n)$ and know from \cite{BM02} that $\readabilityParameter(n) = O(2^n)$.
Can this gap be closed? Do there exist graphs with \readability $\Theta(2^n)$ (as we conjecture),
or, for example, is \readability always bounded by a polynomial in $n$?
Questions regarding complexity are also unexplored, e.g., given a digraph,
is it NP-hard to compute its \readability?
For applications to bioinformatics,
the length of reads can be said to be poly-logarithmic in the number of vertices.
It would thus be interesting to further study the structure of graphs that have poly-logarithmic \readability.

\paragraph{Acknowledgements.}
P.M.~and M.M.~would like to thank Marcin Kami\'nski for preliminary discussions.
P.M.~was supported in part by NSF awards DBI-1356529 and CAREER award IIS-1453527.
M.M.~was supported in part by the Slovenian Research Agency (I$0$-$0035$, research program P$1$-$0285$ and research
projects N$1$-$0032$, J$1$-$5433$, J$1$-$6720$, and J$1$-$6743$).
S.R.~was supported in part by NSF CAREER award CCF-0845701, NSF award AF-1422975 and the Hariri Institute for Computing and Computational Science and Engineering at Boston University.

\bibliographystyle{alpha}
\bibliography{readability}

\newpage
\appendix
\section{Appendix: deferred proofs}\label{app:all}
\subsection{\ReadabilityFull~of bipartite graphs and digraphs}
In this subsection, we prove the following theorem.
\restatabletransformation*
Recall that $\bigraphs$ is defined as the set of balanced bipartite graphs with nodes $[n]$ in each part.
To disambiguate the two partitions, we label the vertices of $G = (V_s, V_p, E) \in \bigraphs$
using notation $V_s = \setof{i_s}{i \in [n]}$ and $V_p = \setof{i_p}{i \in [n]}$.

For the proof, we define the following transformation.
Let $D = ([n],A) \in \digraphs$.
Define $\phi(D) = (V_s, V_p, E)$ as the bipartite graph with
$V_s = \setof{i_s}{i \in [n]}$, $V_p = \setof{i_p}{i \in [n]}$.
and $E = \setof{(i_s,j_p)}{(i,j)\in A}$.
This transformation was proposed in \cite{BM02}.
Similarly, we define the transformation $\psi$, as follows.
Given a bipartite graph $G = (V_s, V_p, E) \in \bigraphs$,
we define $\psi(G) = ([n], A)$ where $A = \setof{(i,j)}{(i_s, j_p) \in E}$.
It is easy to see that $\psi$ is a bijection from $\bigraphs$ to $\digraphs$, as required,
and $\phi$ is its inverse.

The following two lemmas prove the \readability bounds stated in the theorem.
\begin{restatable}{lemma}{restatedoublingtransformone}\label{lem:phi}
	Let $D = (V, A) \in \digraphs$ be a digraph with $A\neq\emptyset$.
	Then $\readabilityParameter(\phi(D)) < \readabilityParameter(D)$.
\end{restatable}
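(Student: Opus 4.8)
The plan is to start from an optimal injective overlap labeling $\ell$ of $D$, of length $r=\readabilityParameter(D)$, and to build from it an overlap labeling $\ell'$ of the bipartite graph $\phi(D)$ of length exactly $r-1$. This immediately yields $\readabilityParameter(\phi(D))\le r-1<\readabilityParameter(D)$. Since $A\neq\emptyset$ forces at least one proper overlap, we have $r\ge 1$ (in fact $r\ge 2$), so the target length $r-1$ is a legal nonnegative value.

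First I would record the elementary restatement of the arc relation in terms of $\ell$. Because $\ov_\ell(i,j)$ is the \emph{minimum} overlap length, the defining condition $0<\ov_\ell(i,j)<r$ for an arc $(i,j)\in A$ is equivalent to the existence of some $k$ with $0<k<r$ and $\suf_k(\ell(i))=\pre_k(\ell(j))$; that is, to the existence of a \emph{proper} overlap of $\ell(i)$ onto $\ell(j)$. The only overlap this excludes is the full one, $k=r$, which is precisely the improper overlap.

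Then I would define $\ell'$ on $\phi(D)=(V_s,V_p,E)$ by truncation: for the suffix side set $\ell'(i_s)=\suf_{r-1}(\ell(i))$, and for the prefix side set $\ell'(j_p)=\pre_{r-1}(\ell(j))$, both of length $r-1$. The key (routine) computation is that for every $k$ with $1\le k\le r-1$ one has $\suf_k(\ell'(i_s))=\suf_k(\ell(i))$ and $\pre_k(\ell'(j_p))=\pre_k(\ell(j))$, since dropping a character from the far end disturbs neither the short suffixes of the $V_s$-string nor the short prefixes of the $V_p$-string. Hence $\ell'(i_s)$ overlaps $\ell'(j_p)$ by some $k\in[1,r-1]$ iff $\ell(i)$ overlaps $\ell(j)$ by some $k\in[1,r-1]$, i.e. iff they overlap properly. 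Chaining equivalences gives $\ov_{\ell'}(i_s,j_p)>0 \iff (i,j)\in A \iff (i_s,j_p)\in E$, so $\ell'$ is a valid bipartite overlap labeling of length $r-1$.

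The step I expect to be the most delicate is verifying that truncation converts ``proper overlap'' (the digraph condition, with its strict inequality $\ov_\ell<\len$) precisely into ``positive overlap'' (the bipartite condition), uniformly over all ordered pairs, including self-loops. The subtlety to get right is that every string trivially overlaps itself by its full length $r$, so this always-present improper overlap $k=r$ is exactly what would wrongly create the edge $(i_s,i_p)$ for a vertex $i$ carrying no self-loop; discarding one character removes this spurious full overlap while preserving every proper overlap of length at most $r-1$. It is worth noting that injectivity of $\ell$ is not actually needed in this argument, and that the bipartite model conveniently demands neither injectivity nor properness of $\ell'$, which is what makes the construction go through cleanly.
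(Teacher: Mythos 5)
Your proposal is correct and follows essentially the same route as the paper: both take an optimal injective overlap labeling of $D$ and truncate it, assigning to $i_s$ the suffix obtained by dropping the first character and to $j_p$ the prefix obtained by dropping the last character, then observe that this preserves exactly the proper overlaps (lengths $1$ through $r-1$) while killing the spurious full-length overlap. Your remarks that injectivity of $\ell$ is not needed and that $r\ge 2$ when $A\neq\emptyset$ are accurate refinements of the paper's presentation, but the construction and the verification are the same.
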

\begin{proof}
	Let $\ell$ be an injective overlap labeling of $D$. Since $A\neq \emptyset$, we have $\len(\ell)\ge 1$.
	Define a labeling $\ell_\phi$ of $\phi(D)$ as follows.
	For $w \in V$, let $\ell_\phi(w_s) = \ell(w)[2 .. |\ell(w)|]$ and
	let $\ell_\phi(w_p) = \ell(w)[1 .. |\ell(w)| - 1]$. (If $|\ell(w)| = 1$, then each of $\ell_\phi(w_s)$ and $\ell_\phi(w_p)$ is the empty string.)
	It is clear that $\ell_\phi$ is a labeling of $\phi(D)$ of length $\len(\ell) - 1$.
    We claim that $\ell_\phi$  is an overlap labeling of $\phi(D)$.
    Suppose that $(u_s,v_p)\in E(\phi(D))$. Then $(u,v)\in A$, which implies $\ov_\ell(u,v)>0$.
    Also,  $\ov_\ell(u,v) < \len(\ell)$.
    Consequently, the shortest overlap between
    $\ell(u)$ and $\ell(v)$ yields an overlap between $\ell_\phi(u_s)$ and $\ell_\phi(v_p)$, implying
    $\ov_{\ell_\phi}(u_s,v_p) > 0$.
    Conversely, the condition  $\ov_{\ell_\phi}(u_s, v_p) >0$ implies
    $0 < \ov_\ell(u,v) < \len(\ell)$.
    Therefore,
    $(u,v)\in A$ and, by the definition of $\phi(D)$, also $(u_s,v_p)\in E(\phi(D))$.
    This shows that $\readabilityParameter(\phi(D)) \le \readabilityParameter(D)-1$.
\end{proof}

\begin{sloppypar}
\begin{restatable}{lemma}{restatedoublingtransformtwo}
	Let $G = (V_s, V_p, E) \in \bigraphs$.
	Then \hbox{$\readabilityParameter(\psi(G)) \leq 2 \cdot \readabilityParameter(G)  + 1$}.
\end{restatable}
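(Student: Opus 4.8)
The plan is to start from an overlap labeling $\ell$ of $G$ of length $r=\readabilityParameter(G)$ and build from it an \emph{injective} overlap labeling $\ell'$ of the digraph $D=\psi(G)$ of length $2r+1$; this immediately yields $\readabilityParameter(\psi(G))\le 2\readabilityParameter(G)+1$. The construction I would use glues together, for each vertex $i\in[n]$, the prefix-string $\ell(i_p)$ and the suffix-string $\ell(i_s)$ around a fresh separator character: set
$$\ell'(i) \;=\; \ell(i_p)\cdot c_i\cdot \ell(i_s),$$
where the $c_i$ are distinct new characters, none of which occurs in any string of $\ell$ and with $c_i\neq c_j$ for $i\neq j$. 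Since $|\ell(i_p)|=|\ell(i_s)|=r$, every label has length $2r+1$. (The degenerate case $r=0$ means $G$ has no edges, hence $D$ has no arcs, and distinct single characters give an injective overlap labeling of length $1=2\cdot 0+1$, so I would assume $r\ge 1$.)

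The first step is the \emph{short-overlap correspondence}, establishing that overlaps encode exactly the arcs of $D$. For $1\le k\le r$, the length-$k$ suffix of $\ell'(i)$ is $\suf_k(\ell(i_s))$ and the length-$k$ prefix of $\ell'(j)$ is $\pre_k(\ell(j_p))$, so $\ell'(i)$ overlaps $\ell'(j)$ by such a $k$ if and only if $\ell(i_s)$ overlaps $\ell(j_p)$ by $k$. By the definition of an overlap labeling of $G$ and of $\psi$, this means a short overlap of $\ell'(i)$ with $\ell'(j)$ exists iff $\ov_\ell(i_s,j_p)>0$ iff $(i_s,j_p)\in E$ iff $(i,j)\in A$.

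The main obstacle, and the reason for the separators, is ruling out \emph{spurious} overlaps of length $k$ with $r+1\le k\le 2r+1$: since $\ov_{\ell'}$ is the \emph{minimum} overlap, any such overlap of length below the full length $2r+1$ would wrongly create an arc. I would argue that for $r+1\le k\le 2r$ the separator $c_i$ sitting at position $r+1$ of $\ell'(i)$ is forced to align with a character of $\ell(j_p)$ inside $\ell'(j)$; because $c_i$ lies outside the alphabet of $\ell$, no such overlap can occur. The only remaining overlap of length exceeding $r$ is the full-length one $k=2r+1$, which holds iff $\ell'(i)=\ell'(j)$; for $i\neq j$ this is impossible since $c_i\neq c_j$, which simultaneously yields injectivity of $\ell'$ and kills the full-length overlap between distinct vertices.

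Finally I would check the two boundary situations that this design handles automatically. For $i\neq j$ with $(i,j)\notin A$, the short overlaps are absent and the full overlap is blocked, so $\ov_{\ell'}(i,j)=0$ and there is correctly no arc; with $(i,j)\in A$ there is a short overlap of length $\le r<2r+1$, so $0<\ov_{\ell'}(i,j)<\len(\ell')$ and the arc is present. The subtle point is the self-loops: $\ell'(i)$ always overlaps itself fully, of length exactly $\len(\ell')=2r+1$, so this trivial overlap is never proper; hence $(i,i)\in A$ is detected precisely by the presence of a short self-overlap, i.e.\ by $\ov_\ell(i_s,i_p)>0$, exactly as required. Because $\ov$ takes the minimum, the spurious full self-overlap never interferes, and $\ell'$ is an injective overlap labeling of $D$ of length $2r+1$, completing the proof.
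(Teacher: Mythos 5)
Your proposal is correct and follows essentially the same route as the paper's own proof: both glue $\ell(i_p)\cdot c_i\cdot \ell(i_s)$ with a fresh per-vertex separator (the paper simply uses the vertex name $i$ itself as that character), use the separator to block all overlaps of length between $r+1$ and $2r$ and to force injectivity, and read off arcs from the short overlaps. Your write-up is somewhat more explicit about the spurious-overlap and self-loop cases, but there is no substantive difference.
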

\end{sloppypar}
\begin{proof}
	Let $\ell_G$ be an overlap labeling of $G$ and
	let $D = (V,A) = \psi(G)$, with $V = [n]$.
	For $w \in V$, define $\ell(w) = \ell_G(w_p) \cdot w \cdot \ell_G(w_s)$.
	Here, $w$ is treated as a character in the alphabet $[n]$.
	We assume without loss of generality that these characters are distinct from the alphabet over which
	$\ell_G$ is defined.
	It is clear that $\ell$ is a labeling of $D$ of length $2\cdot\len(\ell_G) + 1$.
	We claim that $\ell$ is an injective overlap labeling of $D$.
	For every vertex $w\in V$, its label contains a distinct middle character corresponding to $w$,
	which implies injectivity.
	Now, suppose that $(u,v)\in A$. Then $(u_s,v_p)\in E$, which implies $\ov_{\ell_G}(u_s,v_p) > 0$.
    By construction of $\ell$, it follows that
    $0 < \ov_\ell(u,v) \leq \len(\ell_G) < \len(\ell)$.
    Conversely, suppose that $\ov_\ell(u,v) > 0$. By construction of $\ell$,
    it follows that $\ov_\ell(u,v) \le \len(\ell_G)$. Therefore,
    $\ov_{\ell_G}(u_s,v_p) = \ov_\ell(u,v) > 0$, which implies
    $(u_s,v_p)\in E$ and consequently
    $(u,v)\in A$.
    This shows that $\readabilityParameter(\psi(G)) \le 2\cdot \readabilityParameter(G)+1$.
\end{proof}
Given $G \in \bigraphs$, we can apply the two lemmas to derive the inequality of \Cref{thm:transformation}:
$$
\readabilityParameter(G) = \readabilityParameter(\phi(\psi(G)) < \readabilityParameter(\psi(G)) \leq 2\cdot \readabilityParameter(G) + 1.
$$

\subsection{Trees and $C_4$-free graphs}
\restatablefourstrings*
\begin{proof}
Let $G = (V_s, V_p, E)$.
Denote by $w$ be the $\ell$-decomposition.
Suppose for the sake of contradiction that $w$ violates the
$P_4$-rule. Then, there exists an induced four-vertex path $P = (u_1,u_2,u_3,u_4)$ in $G$
with $u_1\in V_p$ (and consequently $u_2,u_4\in V_s$ and $u_3\in V_p$)
 such that $\max\{w(v_1, v_2), w(v_2, v_3), w(v_3, v_4)\} = w(v_2, v_3) <w(v_1, v_2) + w(v_3, v_4)$.
Then, $b=\max\{a,b,c\}$ and $b < a + c$, where
$a = \ov_\ell(u_2,u_1)$, $b = \ov_\ell(u_2,u_3)$, and $c = \ov_\ell(u_4,u_3)$.
We will show that there exists an overlap from $\ell(u_1)$ to $\ell(u_4)$ of length $a + c - b$,
which will prove the lemma, by contradicting the fact that $\ell$ is an overlap labeling and $(u_4,u_1)\not\in E$ (as $P$ is an induced $P_4$).

Let $r$ be the length of $\ell$. Writing the overlaps in terms of substrings, we obtain that
    $\suf_a(\ell(u_2)) = \pre_a(\ell(u_1))$,
    $\suf_b(\ell(u_2)) = \pre_b(\ell(u_3))$, and
    $\suf_c(\ell(u_4)) = \pre_c(\ell(u_3))$.
	Let $d = a + c - b$.
	Note that $1 \leq d \leq \min\{a,c\}$.
	Applying the equalities, we get
    $\pre_d(\ell(u_1)) =
	\ell(u_2)[r - a + 1 .. r - a + d] =
	\ell(u_3)[c - d + 1 .. c] =
    \suf_d(\ell(u_4))$, establishing the existence of the desired overlap.
\end{proof}

\restatablepathexists*
\begin{proof}
	We prove the statement by induction on $m\in \{0,1,\ldots, |E|\}$.
	For the base case, $\ell_0$ does not label any positions.
	Now, assume that $\ell_m$ satisfies the lemma and
	consider the new positions labeled by $\ell_{m+1}$, with $e_{m+1} = (u,v)$.
	Recall that $A$ is a possibly empty string of new characters inserted into the middle of the new labels.
	A position of $u$ labeled with a character from $A$ is adjacent to the position of $v$ labeled with the same character,
	and since the characters are new, these are the only two positions labeled with this character.
	Now, each new position of $u$ that is not labeled with a character from $A$ is labeled with a character from $\ell_m(v)$.
    By the induction hypothesis, $v$ is connected by a path to all vertices with occurrences of the same character in $G^m$, which implies the
 same statement for $u$ in $G^{m+1}$ (using the fact that $E(G^{m+1}) = E(G^{m})\cup \{e_{m+1}\}$).
	The case of the new characters in the label of $v$ is symmetric.
\end{proof}

\subsection{General graphs}
\restatablehubrule*
\begin{proof}
	Denote the vertices and edges of the graph as usual: $G=(V_s, V_p, E)$.
	Consider the $\ell$-decomposition.
	Fix $i\in[k]$.
	First, we show that $G_i$ is a union of disjoint bicliques.	
Observe that a bipartite graph is a disjoint union of bicliques if and only if it contains no induced $P_4$,
where a $P_4$ denotes the path on $4$ vertices and $3$ edges.
Therefore, it suffices to prove that $G_i$ does not contain any induced $P_4$.
Consider a $4$-vertex path $(u,x,y,z)$ in $G_i$. We will show that $G_i$ contains the edge $(u,z)$.
Since each edge of the path is
	in $G_i$, the corresponding overlaps imply that $inner_i(\ell(u))=inner_i(\ell(x))=inner_i(\ell(y)) =
	inner_i(\ell(z))$. Thus, $inner_i(\ell(u)) = inner_i(\ell(z))$.
 To complete the proof that $(u,z)\in E(G_i)$, it remains to show that $(u,z)\notin E(G_j)$ for all $j \in[ i-1]$. For the sake of
	contradiction suppose $inner_j(\ell(u)) = inner_j(\ell(z))$ for some $j\in[ i-1]$. Then
	$inner_j(\ell(u)) = inner_j(\ell(x))$ and, consequently, $(u,x)$ is in $E(G_j)$, which contradicts that
	it is in $E(G_i)$. Therefore, $(u, z)\in E(G_i)$. This completes the proof that $G_i$ is a disjoint union of bicliques.

Next we show that the $\ell$-decomposition is hierarchical. i.e. satisfies the second condition of the HUB-rule definition.
Fix $i\in\{2,\dots,k\}$ and consider two non-isolated twins $u,v$ in $G_i$.
By definition of non-isolated twins, there is a vertex $z$ that is adjacent to both $u$ and $v$ in $G_i$.
By definition of $G_i$, we get $inner_i(\ell(u))  = inner_i(\ell(z))= inner_i(\ell(v)).$
Therefore, for all $j\in[i-1]$, the corresponding inner affixes of labels of $u$ and $v$ are the same:
$inner_j(\ell(u))  = inner_j(\ell(v))$. Consequently, in $G_j$, every neighbor of $u$ must be a neighbor of $v$, and vice versa. That is, $u$ and $v$ are twins in $G_j$ for all $j\in[i-1]$, completing the proof of the lemma.
\end{proof}

\restategeneralizedBM*
\begin{proof}
	In \Cref{sec:hub-up}, we described how to inductively construct a labeling $b$ of the appropriate length.
	It remains to prove that the final labeling is an overlap labeling of $G$. It is easy to see that the initial labeling of $G_1$ is an overlap labeling. Now we show that if $t$ is an overlap labeling of $G_1\cup\dots\cup G_i$, our construction yields an overlap labeling of $G_1\cup\dots\cup G_{i+1}$.

Suppose first that $(u,v)$ is an edge of $G_1\cup\dots\cup G_{i+1}$.
If $(u,v)$ is an edge of $G_{i+1}$ then, by construction, the labels of $u$ and $v$ after $i+1$ steps are identical,
and consequently they overlap.
If $(u,v)$ is not an edge of $G_{i+1}$, then it is an edge of $G_1\cup\dots\cup G_{i}$, and the bicliques $B$ and $B'$ of $G_{i+1}$ containing $u$ and $v$, respectively, are distinct. This implies that the labels of $u$ and $v$ after $i+1$ steps are of the form
$x\cdot a_B\cdot t(u)$ and $t(v)\cdot a_{B'}\cdot y$, respectively, for some (possibly empty) strings $x,y,a_B,$ and $a_{B'}$, where
$t(u)$ and $t(v)$
are the respective labels of $u$ and $v$ after $i$ steps.
Since, by the induction hypothesis, $t(u)$ and $t(v)$ overlap, so do the extended labels.

Finally, if $(u,v)\in V_s\times V_p$ is a pair of nonadjacent vertices of $G_1\cup\dots\cup G_{i+1}$,
then $u$ and $v$ are nonadjacent in $G_1\cup\dots\cup G_{i}$. By induction hypothesis, their labels after $i$ steps, $t(u)$ and $t(v)$, do not overlap.
Since $u$ and $v$ are also not adjacent in $G_{i+1}$, the bicliques of $G_{i+1}$ containing $u$ and $v$, say $B$ and $B'$, are distinct, and
thus the labels of $u$ and $v$ after $i+1$ steps are
of the form $x\cdot a_B\cdot t(u)$ and $t(v)\cdot a_{B'}\cdot y$, respectively. Moreover, if both $x\cdot a_B$ and $a_{B'}\cdot y$ are nonempty then $a_B \neq a_{B'}$. Hence,
by construction, the two labels do not overlap.
This completes the proof.
\end{proof}

\subsection{Almost all graphs have \readability $\Omega(n/\log n)$}
In this subsection, we give a proof of the following theorem.
\restatablebinary*
We will need the following reduction, implicitly shown in \cite{BM02}.
\begin{property}[\cite{BM02}]\label{prop:alphabet}
	Let $G$ be a digraph or a bipartite graph, let
	$\Sigma$ and $\Sigma'$ be alphabets with $|\Sigma|\ge |\Sigma'|\ge 2$,
	and let $\ell$ be an overlap labeling of $G$ over $\Sigma$.
	Then there exists an overlap labeling $\ell'$ of $G$ over $\Sigma'$ such that
	$\len(\ell') \le (2\log_{|\Sigma'|}|\Sigma|+1)\cdot \len(\ell)$.
\end{property}
The proof of \Cref{lem:binary} for constant sized alphabets is in the main text.
For variable sized alphabets, we give the proof here.
\begin{proof}[Proof (variable sized alphabets)]
The proof of the constant sized alphabet shows that only $o(2^{n^2})$ graphs in
$\bigraphs$ have \readability at most $n/3$ over the binary alphabet.
It therefore suffices to show that every graph in $\bigraphs$ of \readability at most
$n/(15\log_2 n)$ (over an unrestricted alphabet) has \readability at most $n/3$ over the binary alphabet.
This is indeed the case. Suppose that $G\in \bigraphs$
is of \readability $r \le n/(15\log_2 n)$, and fix
an overlap labeling $\ell$ of $G$ of length $r$. Since $\ell$ uses $2nr$ characters in total,
the alphabet size of labeling $\ell$ can be assumed to be at most $2nr$.
By Property~\ref{prop:alphabet}, $G$ has an overlap labeling $\ell'$ over the binary alphabet such that
 $\len(\ell') \le (2\log_2(2nr)+1)r$.
Since $2nr\le n^2$, we have $2\log_2(2nr)+1\le 5\log_2 n$ and consequently
the \readability of $G$ over the binary alphabet is at most
$\len(\ell')\le 5r\log_2 n \le n/3$.
The proof for $\digraphs$ is analogous and is omitted.
\end{proof}

\subsection{Graph family with \readability $\Omega(n)$}\label{app:Hadamard}
We prove the following lemma, which was used in Section~\ref{sec:hadamard-graphs} to prove \Cref{thm:hadamard}.

\begin{lemma}\label{lem:Hadamard-neighbors}
In graph $H_k$,
if $i$ vertices have a common neighbor, then they have
at least $2^{k-i} = n/2^i$ common neighbors.
Moreover, if two vertices have a common neighbor, then they have
exactly $n/4$ common neighbors.
\end{lemma}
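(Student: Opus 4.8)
The plan is to work directly with the linear-algebraic structure of $H_k$ over $\mathbb{F}_2$. Each vertex corresponds to a nonzero vector in $\{0,1\}^k$, and adjacency is determined by whether the inner product of the associated codewords is odd, i.e.\ equals $1$ in $\mathbb{F}_2$. A vertex $u_s$ (or $u_p$) is a common neighbor of a set of vertices $v^{(1)}, \dots, v^{(i)}$ in the opposite part precisely when $\langle u, v^{(j)}\rangle = 1$ for every $j\in[i]$. So counting common neighbors amounts to counting solutions of a system of $i$ affine (over $\mathbb{F}_2$) linear equations in the unknown vector $u$, subject to $u\neq 0^k$.

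First I would prove the general statement. Suppose the $i$ vertices $v^{(1)},\dots,v^{(i)}$ have a common neighbor; this hypothesis guarantees the affine system $\langle u, v^{(j)}\rangle = 1$ is consistent. Let $d$ be the dimension of the span of $\{v^{(1)},\dots,v^{(i)}\}$ in $\mathbb{F}_2^k$; clearly $d\le i$. The solution set of a consistent system of $i$ linear equations whose coefficient vectors span a $d$-dimensional space is an affine subspace of dimension $k-d$, hence has exactly $2^{k-d}\ge 2^{k-i}$ elements. The only subtlety is the exclusion of $u = 0^k$: since $\langle 0^k, v^{(j)}\rangle = 0 \neq 1$, the zero vector is never a solution of the (affine, nonhomogeneous) system, so no correction is needed. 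This yields at least $2^{k-i} = n/2^i$ common neighbors, as claimed.

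For the second (exact) statement, I would specialize to $i=2$ with two distinct vertices $v^{(1)}\neq v^{(2)}$ (both nonzero). The key point is that two distinct nonzero vectors in $\mathbb{F}_2^k$ are always linearly independent over $\mathbb{F}_2$: the only nontrivial linear combinations are $v^{(1)}$, $v^{(2)}$, and $v^{(1)}+v^{(2)}$, and since $v^{(1)}\neq v^{(2)}$ we have $v^{(1)}+v^{(2)}\neq 0$, so no nontrivial combination vanishes. Hence the span has dimension $d=2$ exactly, and provided the system is consistent (which the hypothesis of a common neighbor supplies), the solution set has size exactly $2^{k-2} = n/4$, again with $0^k$ automatically excluded.

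The main obstacle I anticipate is not the dimension count itself but handling the two bookkeeping issues cleanly: verifying that the common-neighbor hypothesis genuinely certifies consistency of the affine system (so that I may invoke the exact $2^{k-d}$ count rather than merely an upper bound), and confirming that the forbidden vector $0^k$ falls outside every such solution set. Both resolve in my favor because the equations are nonhomogeneous with right-hand side $1$, but they must be stated explicitly to make the ``exactly $n/4$'' conclusion airtight. Once these are pinned down, the argument reduces to the standard fact that a consistent linear system over $\mathbb{F}_2$ with a rank-$d$ coefficient matrix has $2^{k-d}$ solutions.
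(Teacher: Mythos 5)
Your proof is correct and follows essentially the same route as the paper: interpret common neighbors as solutions of the consistent affine system $Wx=\mathbf{1}$ over $GF(2)$, count them as $2^{k-r}$ with $r=\mathrm{rank}(W)\le i$, and observe that two distinct nonzero vectors are linearly independent so $r=2$ exactly in the second case. Your explicit remark that $0^k$ is never a solution (because the system is nonhomogeneous) is a small point the paper's proof leaves implicit, and it is a welcome clarification rather than a deviation.
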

\begin{proof}
Suppose that vertices $w_1,\ldots, w_i\in \{0,1\}^{k}\setminus\{0^k\}$ in the same part of the bipartition of $H_k$ have a common neighbor.
Then the set $X$ of all vectors $x\in \{0,1\}^k$
such that $w_j^\top x = \sum_{p = 1}^kw_j[p]x[p]  \equiv 1\pmod {2}$ is non-empty.
Notice that $X\subseteq \{0,1\}^k$ is the set of solutions of the equation $W x  = \mathbf{1}$
over the field $GF(2)$, where $W$ is the $i\times k$ matrix with the rows  formed by the $w_j$'s, and
$\mathbf{1}$ is the all-one vector of length $i$.
The set $X$ forms an affine subspace of the vector space $\{0,1\}^k$ over $GF(2)$
of dimension $k-r$, where $r = {\it rank}(W)$.
Therefore, vertices $w_1,\ldots, w_i$ have exactly $|X| = 2^{k-r}$ common neighbors.
Since $r\le i$, we obtain $|X|\ge 2^{k-i}$.

If $i = 2$, then the rank of $W$ is exactly $2$, which implies the second part of the lemma.
\end{proof}

\end{document}